\newcommand{\nicespace}{\vspace{15px}}
\newtheorem{observation}{Observation}
\newtheorem{corollary}{Corollary}
\newtheorem{theorem}{Theorem}
\newtheorem{remark}{Remark}
\newtheorem{definition}{Definition}
\newtheorem{example}{Example}
\newcommand{\BibTeX}{B\kern-.05em{\sc i\kern-.025em b}\kern-.08em\TeX}
\newcommand{\appsymb}{$\heartsuit$}
\newif\ifarxiv
\newcommand{\appref}[1]{{\appsymb}}
\newcommand{\appendixproof}[2]{%
  \ifarxiv{}#2\else{}\gappto{\appendixProofText}
	{
		\subsection{Proof of \cref{#1}}\label{proof:#1}
		#2
	}
	\fi{}
}
\newcommand{\mypara}[1]{\smallskip\noindent\textbf{#1.}\quad}
\newcommand{\bb}[1]{\texttt{\textbf{\{}}#1\texttt{\textbf{\}}}}
\begin{document}

\begin{frontmatter}

%%% Use this command to specify your submission number. In doubleblind mode, it will be printed on the first page.

\paperid{2707} 

\title{A Dynamic Approach to Collaborative Document Writing}

%%% Authors section

%%% Use this combination of commands to specify all authors of your 
%%% paper. Use \fnms{} and \snm{} to indicate everyone's first names 
%%% and surname. This will help the publisher with indexing the 
%%% proceedings. Please use a reasonable approximation in case your 
%%% name does not neatly split into "first names" and "surname".
%%% Specifying your ORCID digital identifier is optional. 
%%% Use the \thanks{} command to indicate one or more corresponding 
%%% authors and their email address(es). If so desired, you can specify
%%% author contributions using the \footnote{} command.

\author[A]{\fnms{Avital}~\snm{Finanser}\thanks{Email: finanser@post.bgu.ac.il, ORCID: 0009-0001-6019-9731}}
\author[A,B]
{\fnms{Nimrod}~\snm{Talmon}\thanks{Email: nimrodtalmon77@gmail.com, ORCID: 0000-0001-7916-0979}}
\address[A]{Ben-Gurion University}
\address[B]{IOG (Input Output Global)}

\begin{abstract}
We introduce a model for collaborative text aggregation in which an agent community coauthors a document (modeled as an \textit{unordered} collection of paragraphs) using a dynamic mechanism: agents propose paragraphs and vote on those suggested by others. We formalize the setting and explore its realizations, concentrating on voting mechanisms that aggregate votes into a single, dynamic document. We focus on two desiderata: the eventual stability of the process and its expected social welfare. Following an impossibility result, we describe several aggregation methods and report on agent-based simulations that utilize natural language processing (NLP) and large-language models (LLMs) to model agents. Using these simulations, we demonstrate promising results regarding the possibility of rapid convergence to a high social welfare collaborative text.
\end{abstract}

\end{frontmatter}

%%%%%%%%%%%%%%%%%%%%%%%%%%%%%%%%%%%%%%%%%%%%%%%%%%%%%%%%%%%%%%%%%%%%%%%%

\section{Introduction}

We are interested in enabling communities to collaboratively write documents. This is valuable, e.g., for building residents to draft their shared rules; for a company/society to co-create their mission statement; or for a DAO to establish its digital constitution~\cite{talmon2023social} (possibly with token-weighted voting). 
Existing platforms like Wikipedia,\footnote{https://en.wikipedia.org} Google Docs,\footnote{https://www.google.com/docs} Notion,\footnote{https://www.notion.so} and Overleaf\footnote{https://overleaf.com} are constrained by governance models that fall into two extremes: \textbf{autocracy}, where a single owner controls the process (and other agents merely provide comments), and \textbf{anarchy}, where unrestricted access may lead to disorganization. These limitations hinder the potential for a collaborative framework for inclusive, \textbf{democratic} text creation.

Here, we propose a general approach for collaborative text creation, with particular focus on collaborative \textbf{constitution writing}. We formally describe and analyze the model from a social choice perspective -- both theoretically and via computer-based simulations (including using LLMs). Informally, the approach is as follows:\footnote{Our inspiration is from Consenz~\cite{consenz2023}, a (currently-funded) blockchain-oriented platform for text aggregation that was operated within the Cardano ecosystem. In a way, our work formalizes a stylized and generalized version of its mechanism.}
\begin{itemize}
    
\item
% \textbf{Dynamic process}: The process is dynamic, so that community members can -- at each point in time -- propose new paragraphs to be included in the joint document, as well as vote on paragraphs suggested by other agents.
\textbf{Dynamic process}: The process is dynamic, allowing community members to propose new paragraphs and vote on paragraphs suggested by other agents at any point in time.

\item 
\textbf{Continuous updates}: An aggregated text document is continuously updated as the result of agent operations (propose and vote).

\end{itemize}

From a social choice perspective, the core challenge is to aggregate voters’ preferences (as revealed by their sequence of dynamic operations) into the evolving collective crafted text.
Informally, and corresponding to the use-case collaborative constitution writing, we aim to identify aggregation rules that achieve two key objectives: \textbf{stability} (ensuring that, over time, the aggregated text undergoes fewer changes and becomes less and less volatile -- this is especially important for text that should be solid, like constitutions) and \textbf{high social welfare} (ensuring that, at convergence, the aggregated text reflects a result that voters find satisfying and aligns with their preferences). \footnote{Our definition of high social welfare is inherently subjective; alternatively, epistemic approaches may be considered as well.}

After reviewing related work and platforms and formalizing our setting (Section~\ref{section:formal model}), we go on to our theoretical analysis: first, we formalize axiomatic properties beneficial for reasoning about the different possible aggregation methods -- in particular, we consider social welfare maximization and stability in our context (Section~\ref{section:social welfare and stability}). To principally consider our rule design space, we then describe a rich family of aggregation rules, termed Consensus Conditioned Rules (CCR; Section~\ref{section:CCR}) -- these rules assign scores to individual paragraphs and include in the aggregated text those that score sufficiently high; crucially, the scoring becomes more stringent as the process progresses. We then show that even though social welfare maximization and stability are indeed mutually exclusive (Theorem~\ref{corollary:mutually exclusive}), some CCRs strike good balances between these two desiderata -- concretely, we report simulation-based results (Section~\ref{section:simulations}) in which we simulate different agent populations as they iteratively interact with the system. Specifically, we utilize LLMs and show promising results regarding convergence pace and the output quality at convergence.

Proofs (partially) missing from the main text are denoted by $\heartsuit$.

\mypara{Related Work}%\label{section:related work}
%
% We discuss relevant literature.
%
% \paragraph{Collaborative Platforms and Democratic Systems.} 
%
The landscape of \textit{collaborative/democratic platforms} has evolved significantly~\cite{collabDocs}. We mention \emph{LiquidFeedback} -- who pioneered the integration of liquid democracy principles into deliberation-based on decision making~\cite{liquidfeedback}; and \emph{Polis} -- who combines machine learning with collective intelligence to facilitate large-scale democratic deliberation~\cite{polis}. However, these platforms focus on deliberation and decision-making rather than document creation.
Regarding vote elicitation, agents in our model either approve or disapprove each paragraph; this is studied both theoretically~\cite{gonzalez2019dilemma, laruelle2021not} and implemented on certain platforms, such as \emph{Reddit}.\footnote{https://www.reddit.com}

\textit{Theoretically}, as our aggregated texts do not consider paragraph ordering, we relate to multiwinner elections~\cite{mwchapter}, and elections with a variable number of winners~\cite{ApprovalVNW, ComplexityVNW}. In a sense, we provide a dynamic extension of the work of Halpern et al.~\cite{RepresentationIncomplete} -- we introduce dynamic properties where both the candidate pool and the solution size remain unrestricted and continuously evolving.
Filtser et al.~\cite{filtser2019distributed} address the fundamental challenge of maintaining collective decisions as preferences evolve over time, providing algorithms that allow for the maintenance of an aggregation result.  
We also mention the work of Bulteau et al.~\cite{bulteau2021aggregation}, who suggested embedding voter preferences in a metric space and then aggregating them with
general-purpose metric aggregation rules. Most notably, our work differs in introducing a dynamic, iterative context for the aggregation process.

%
% \paragraph{LLMs in Social Choice.} 
%

As for our use of \textit{Large Language Models (LLMs)}, indeed, recent advances in LLMs have spawned interest also in social choice. One relevant line of work is \emph{generative social choice}~\cite{GenerativeSocialChoice}, in which survey-based preference collection and LLM-driven statement generation are combined to capture collective opinions. Building on this, Yang et al.~\cite{LlmVoting} utilized an agent-based modeling approach using LLMs to simulate collective decision-making processes. Our textual context and focus on document creation distinguish our work from these. Note that, while generative social choice utilizes LLMs for algorithmic purposes, we utilize them for searching and validating the solution space via agent-based simulation.

% \avitaaal{This is the edition:} Elevating AI in social choice can be as a solution proposition (the way conducted in general social choice), or as a tool in validating and searching solution space, as we are doing in this work (agent modeling).

%%%%%%%%%%%%%%%%%%%%%%%%%%%%%%%%%%%%%%%%%%%%%%%%%%%%%%%%%%%%%%%%%%%%%%%%

\section{Formal Model}\label{section:formal model}

Our formal model describes the collaborative document writing system. Formally, our setting is as follows (see Figure~\ref{fig:platform} for an illustration of a possible UX for the corresponding system):
\begin{itemize}

\item An infinite set $P = \{p_1, \ldots\}$ of possible paragraphs for inclusion in the aggregated text (i.e., the solution) is given. (Practically, the set is finite and defined by the proposed paragraphs.)

\item
We assume a population of $n$ \emph{agents}, $A = \{a_1, \ldots, a_n\}$.

\item
Agents sequentially generate \emph{events}, representing creation and preferences regarding the paragraphs of $P$. 
%An event \avitaaal{is a tuple} $e = (a, p, v)$, generated by agent $a \in A$ regarding either approving ($v = 1$) or disapproving ($v = -1$) paragraph $p \in P$: so, $(a, p, 1)$ ($(a, p, -1)$) corresponds to $a$ approving (resp., disapproving) $p$. 
An event is a tuple $e = (a, p, v)$
where $a \in A$, $p \in P$, and $v \in \{+1, -1\}$ indicates an approval ($v=+1$) or disapproval ($v=-1$) of paragraph $p$ by agent $a$. Hence, $(a, p, +1)$ ($(a, p, -1)$) corresponds to $a$ approving (resp., disapproving) $p$.
For $e = (a, p, v)$, we denote $a := A(e)$, $p := P(e)$, and $v := V(e)$.
An event $e = (a, p, 0)$ corresponds to agent $a$ withdrawing their vote on $p$ (abstaining).
% (i.e., abstaining) - to save the additional line 
\item 
We define an \emph{event list} $E = \langle e_1,\ldots, e_t\rangle$ as a sequence of (ordered) events.
Correspondingly, a collaborative document writing instance is a pair $\mathcal{T} = (A, E)$ where an agent community $A$ generates an event list~$E$.

\item
Given an instance $\mathcal{T} = (A, E)$, an \emph{aggregation rule} $\mathcal{R}$ produces an aggregated set of paragraphs (i.e., a solution; or, the aggregated text) denoted by $\mathcal{R}(\mathcal{T}) \subseteq P$. It will be useful to address a solution as a binary vector $S = \langle s_1,...,s_{|P|} \rangle$, with $s_i \in \{+1,-1\}$, $i \in [P]$ (where $s_i$ signifies whether the $i$th paragraph is included or excluded from the document $\mathcal{R}(\mathcal{T})$).

\end{itemize}

\begin{remark}
Observe that our aggregated text is an \textbf{unordered set} of paragraphs -- i.e., the order of paragraphs does not matter.
This aligns well with our use cases of collaboratively drafting a set of rules, co-creating a mission statement, or establishing an off-chain constitution for a DAO. We discuss this issue further in Section~\ref{section:outlook}.
\end{remark}

% \subsection{Additional Notation and Remarks} 

We also use the following additional notation:

\begin{itemize}

\item
For an event list $E$, $P(E) := \{ p \in P \,:\, |E\!\downarrow_p| > 0 \}$ denotes the paragraphs considered in $E$.

\item
For an event list $E$, $E\!\downarrow_p:= \langle e \in E: P(e)=p \rangle$ denotes the sub-list of events of $E$ that consider paragraph $p$; similarly, $E\!\downarrow_{a,p}:= \langle e \in E: A(e) = a, P(e) = p \rangle$ denotes the sub-list of events of $E$ generated by agent $a$ regarding paragraph $p$.

\item
For an event list $E = \langle e_1, \ldots, e_t \rangle$, we use $E + e :=  \langle e_1, \ldots, e_t, e \rangle$ to denote the appending of event $e$ to $E$.

\item 
As votes can change the stance of each agent $a$ on paragraph $p$, it is useful to define the following.

\begin{definition}[Stance]
    \label{Stance}
    For an instance $\mathcal{T}= (A, E)$; its \emph{stance} -- $\mathrm{stance}(\mathcal{T})$ -- is a $|A| \times |P|$ matrix, where each entry $\mathrm{stance}(\mathcal{T})_{a,p}\in \{+1, 0, -1\}$ 
    %being 
    is the last recorded 
    %stance
    vote of agent $a$ on paragraph $p$. Formally, 
    $\mathrm{stance}(\mathcal{T})_{a, p} = V(E\!\downarrow_{a,p}[-1])$, where $V(E\!\downarrow_{a,p}[-1])$ is the value
    of $a$’s last event on $p$.
    %the last vote performed by agent $a$ on paragraph~$p$.
\end{definition}

% So, the stance matrix reflects the last recorded stance of all agents in all paragraphs.

\begin{example}\label{ex:stance}
Let $A = \{a_1, a_2, a_3\}$ be an agent community and consider 
$E = \langle (a_1, p_2, +1), (a_2, p_3, +1), (a_1, p_2, 0) \rangle$ and $E' = \langle (a_3, p_1, +1), (a_2, p_3, +1), (a_3, p_1, 0) \rangle$. Note that (1) $P(E) = \{p_2,p_3\}$ and $ P(E') = \{p_1,p_3\}$; and (2) $\mathrm{stance}((A, E)) = \mathrm{stance}((A, E'))$ with $\mathrm{stance}((A, E))_{a, p} = +1$ for $a = a_2$ and $p = p_3$ and $0$ otherwise.  
\end{example}

\item
For an event list $E$ and a paragraph $p$, we 
%denotes
denote $p^+$ (resp., $p^-$) as the number of agents whose final vote for paragraph $p$ is an approval (resp., disapproval); i.e., $p^+:= \sum_{a=1}^{n} \mathbb{1}_{\{\mathrm{stance}(\mathcal{T})_{a,p} = +1\}}$ and $p^-:= \sum_{a=1}^{n} \mathbb{1}_{\{\mathrm{stance}(\mathcal{T})_{a,p} = -1\}}$.

\item
Given the definition of a voter's stance on a paragraph, we can define the \emph{tally} of an instance.
\begin{definition}[Tally]\label{tally}
  The \emph{tally} of an instance $\mathcal{T} = (A, E)$ is an $\{-1, +1\} \times |P|$ matrix, denoted by $\mathrm{tally}(\mathcal{T})$, with $\mathrm{tally}(\mathcal{T})_{v,p}$ being the number of agents whose final vote is typed $v$ for paragraph~$p$. Formally,
    $\mathrm{tally}(\mathcal{T})_{v,p} := \sum_{a \in A} \mathbb{1}_{\{\mathrm{stance}(\mathcal{T})_{a,p} = v\}}$.
\end{definition}

\end{itemize}

\begin{remark}
An event $(a, p, +1)$ does not preclude a subsequent event $(a, p, -1)$, and vice versa. Also, suggesting a new paragraph inherently requires an initial approval event. Thus, for each new paragraph $p \in P$ suggested in event list $E$, the first event satisfies $V(E\!\downarrow_p[1]) = +1$ (i.e., $V(E\!\downarrow_p[1]) = +1$ holds for each $p\in P(E)$).
%
% Old version 
%
% An event $(a, p, 1)$ does not exclude the possibility of a future event $(a, p, -1)$, and vice versa. Moreover, the suggestion of a new paragraph is its first approval event. A new suggested paragraph $p \in P$ in event list $E$ is thus represented as $V(E\!\downarrow_p[1]) = 1$; such that $V(E\!\downarrow_p[1]) = 1$ holds for each $p\in P(E)$.
%
\end{remark}

\begin{remark}
Figure~\ref{fig:platform} illustrates a possible UX corresponding to our model: agents interact with the system using the left \textit{Editor} pane, in which they suggest new paragraphs and vote on existing ones; and the right \textit{Document} pane presents the updated aggregated text -- the result of applying $\mathcal{R}$ on the event list $E$.
\end{remark}

% In particular, a group of axioms that are essential as properties of the aggregation rule are Anonymity, Versatility, and Relevance. In the context of this current work, the rules examined will necessarily satisfy these axioms.

% Anonymity ensures that the aggregation outcome is independent of individual agent identities (as the result is not affected by renaming/permuting agent identities).

%

\section{Social Welfare and Stability}\label{section:social welfare and stability}

As our focal properties, we concentrate on stability (i.e., convergence of the solution) and social welfare maximization (i.e., having the solution reflect the voters' preferences well).
We begin by formalizing social welfare maximization, utilizing a natural satisfaction-based approach. 
Let the number of paragraphs that agent $a$ voted on be $N_a:= \sum\limits_{p=1}^{k} \mathbb{1}_{\{ \mathrm{stance}(\mathcal{T})_{a,p} \neq 0\}}$ (recall that agents need not vote on all paragraphs); then:

\begin{definition}[Agent Satisfaction]
Given an instance $\mathcal{T}$, an agent $a$, a paragraph $p$, and a solution $S$, the \emph{satisfaction of $a$ from $p$ in $S$} is:
\[
    Sat_{a,p}(\mathcal{T}, S) =  
    \begin{cases} 
    \frac{1}{N_a}        & \text{if $s_p = \mathrm{stance}(\mathcal{T})_{a,p}$} \\
    0             & \text{otherwise}
    \end{cases}
\]
The \emph{satisfaction of $a$ from $S$} is $Sat_a(\mathcal{T}, S) = \sum\limits_{p=1}^{k} Sat_{a,p}(\mathcal{T}, S)$.
\end{definition}

\begin{figure}[tb]
    \centering    
    \includegraphics[width=0.48\textwidth]{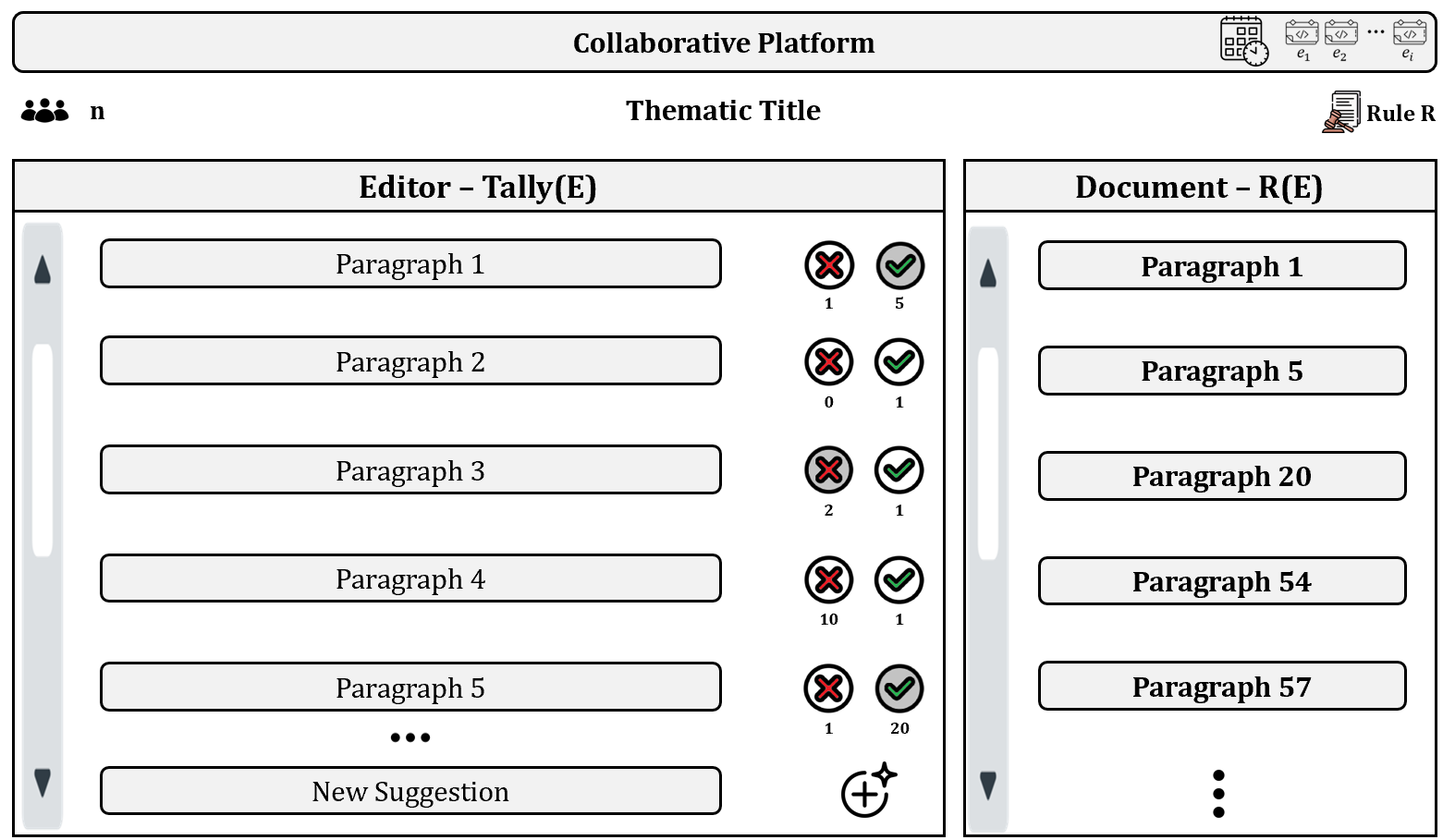} 
    \caption{Collaborative platform for drafting a document. On the left is the editor pane seen by individual agents, enabling them to suggest new paragraphs and vote on existing ones, while on the right is the dynamic aggregated text that is the result of applying the voting rule on agent operations.}
    \label{fig:platform}
    \nicespace
\end{figure}

That is, agent satisfaction from a solution is the \textbf{percentage} of paragraphs for which the solution aligns with her preferences (i.e., she approves and the paragraph gets included; or she disapproves and the paragraph gets excluded) out of all paragraphs she voted on. This measure is reminiscent of the classical notion of \emph{accuracy}; furthermore, the normalization by $N_a$ ensures a \textbf{one-person, one-vote} behavior.
Building on the definition above, the overall satisfaction of the community with a solution is defined as the sum over the agents:

\begin{definition}[Community Satisfaction]
\label{Solution_Satisfaction}
  Let $\mathcal{T}=(A, E)$ be an instance and $S$ a solution. The \emph{satisfaction} of $A$ from $S$ is:
  \[
    Sat_A(\mathcal{T}, S) = \sum\limits_{a=1}^{n} Sat_a(\mathcal{T}, S)= \sum\limits_{a=1}^{n} \sum\limits_{p=1}^{k} \frac{\mathbb{1}_{\{\mathrm{stance}(\mathcal{T})_{a,p} = s_p\}}}{N_a}\ .    
  \]
\end{definition}

A rule is \textit{SWM} if it always maximizes the social welfare.
% }We can define \textit{SWM}: the following axiom -- a rule $\mathcal{R}$ is \emph{maximizing social welfare} if it always outputs a solution with the maximum possible community satisfaction. 

\begin{definition}[Social Welfare Maximizer (SWM)]
\label{SWM}
    An aggregation rule $\mathcal{R}$ satisfies SWM if, for any instance $\mathcal{T}=(A, E)$, the solution $\mathcal{R}(\mathcal{T}) \in \text{argmax}_{S \in 2^{|P|}} Sat_{A}(\mathcal{T}, S)$.
\end{definition}

\subsection{An Efficient Social Welfare Maximizer Rule}

We are interested in identifying social welfare maximizers.
% Next, we characterize these rules. 
% class of such voting rules (these operate by including paragraphs with majority support). First, we introduce some notation.

\begin{definition}
  Given an instance $\mathcal{T} = (A,E)$ and a paragraph $p$, let $p^{+}_{r}$ ($p^{-}_{r}$) denote the \textbf{relative approval} (\textbf{relative disapproval}):
    \[
    p^{+}_{r} := \sum_{a \in A} \frac{\mathbb{1}_{\{\mathrm{stance}(\mathcal{T})_{a,p} = +1\}}}{N_a} \text{ and }    
    p^{-}_{r} := \sum_{a \in A} \frac{\mathbb{1}_{\{\mathrm{stance}(\mathcal{T})_{a,p} = -1\}}}{N_a}\ .
    \]  
\end{definition}

\begin{definition}[Relative Majority Rule]
\label{RM}
  A rule $\mathcal{R}$ is a \emph{relative majority} rule (an \emph{RM rule}) if, for each instance $\mathcal{T}$ and paragraph $p$, it satisfies: (1) $\text{if } p^{+}_{r} > p^{-}_{r} \text{ then } p \in \mathcal{R}(\mathcal{T})$; and (2) $\text{if } p^{+}_{r} < p^{-}_{r} \text{ then } p \notin \mathcal{R}(\mathcal{T})$. (Breaking ties arbitrarily for paragraphs with $p^+_r = p^-_r$.)
  % \end{align*}
\end{definition}

Intuitively, an RM rule includes a paragraph if the proportion of agents approving it exceeds the proportion disapproving it; otherwise, it is excluded, with ties resolved arbitrarily. Below is the characterization result.

\begin{theorem}[\appref{thm:RMisSWM}]\label{thm:RMisSWM}
  A rule satisfies SWM iff it is an RM rule.
\end{theorem}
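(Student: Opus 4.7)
The plan is to show that the social welfare objective decomposes additively across paragraphs, so that maximization reduces to an independent per-paragraph decision, and that this per-paragraph decision is exactly what the RM rule prescribes.

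First I would rewrite $Sat_A(\mathcal{T},S)$ by swapping the order of summation and grouping by paragraph: since $s_p \in \{+1,-1\}$, for each $p$ the inner sum $\sum_{a} \mathbb{1}_{\{\mathrm{stance}(\mathcal{T})_{a,p}=s_p\}}/N_a$ equals $p^+_r$ when $s_p=+1$ and $p^-_r$ when $s_p=-1$. Thus
\[
Sat_A(\mathcal{T},S) \;=\; \sum_{p \in P}\Bigl( \mathbb{1}_{\{s_p=+1\}}\, p^+_r \;+\; \mathbb{1}_{\{s_p=-1\}}\, p^-_r \Bigr).
\]
Because each paragraph appears in exactly one term and the choice of $s_p$ is unconstrained across paragraphs, the maximization over $S \in \{+1,-1\}^{|P|}$ separates: an optimal $S$ is obtained by setting, for each $p$ independently, $s_p=+1$ if $p^+_r > p^-_r$, $s_p=-1$ if $p^+_r < p^-_r$, and either value if $p^+_r = p^-_r$.

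For the ``if'' direction, any RM rule by definition makes exactly these per-paragraph choices on the strict-inequality paragraphs, so its output lies in $\operatorname{argmax}_S Sat_A(\mathcal{T},S)$; on tied paragraphs, both choices yield the same contribution and hence the same total welfare, so arbitrary tie-breaking is harmless. For the ``only if'' direction, I would argue by contrapositive: if $\mathcal{R}$ violates RM on some instance, there is a paragraph $p$ with, say, $p^+_r > p^-_r$ yet $p \notin \mathcal{R}(\mathcal{T})$ (the opposite case is symmetric). Flipping $s_p$ to $+1$ while keeping all other coordinates fixed strictly increases the welfare by $p^+_r - p^-_r > 0$, contradicting SWM.

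I do not anticipate a serious obstacle here; the only subtlety is making sure the bookkeeping around ties is explicit (so that both the SWM and RM definitions accommodate the same set of tie-breaking choices), and that the decomposition step is justified by the fact that $\{+1,-1\}$ is an exhaustive and mutually exclusive partition of the possible values of $s_p$, so exactly one indicator fires per paragraph.
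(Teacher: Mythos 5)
Your proposal is correct and follows essentially the same route as the paper's proof: both rest on swapping the order of summation to decompose $Sat_A(\mathcal{T},S)$ into independent per-paragraph contributions ($p^+_r$ vs.\ $p^-_r$), prove sufficiency by noting RM makes the pointwise-optimal choice on each paragraph (with ties contributing equally), and prove necessity by observing that any deviation from RM on a single paragraph strictly lowers welfare. Your handling of ties and the single-coordinate flip in the ``only if'' direction is, if anything, slightly more explicit than the paper's comparison to an RM rule differing in one paragraph, but it is the same argument in substance.
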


\appendixproof{thm:RMisSWM}{
\begin{proof}
We prove the two implications separately:
%.

\paragraph{Sufficiency.}
We show that a relative majority rule satisfies SWM. Let $\mathcal{R}_m$ be a relative majority (RM) rule that, for any collaborative document instance $\mathcal{T}=(A, E)$, produces a solution $S_{\mathcal{R}_m} \in \{+1,-1\}^{k}$, where $k=|P(E)|$ (the number of proposed paragraphs in $E$). 
%The accuracy satisfaction 
% Note - we did not call it that way
The community satisfaction score for $S_{\mathcal{R}_m}$ is $Sat(\mathcal{T},S_{\mathcal{R}_m}) =  \sum\limits_{a=1}^{n} \sum\limits_{p=1}^{k} \frac{\mathbb{1}_{\{S_{\mathcal{R}_m}(p) = \mathrm{stance}(\mathcal{T})_{a,p}\}}}{N_a} $ (Definition~\ref{Solution_Satisfaction}). Note that, by commutativity of finite sums, we can decompose this measure as $Sat(\mathcal{T}, S_{\mathcal{R}_m}) =  \sum\limits_{a=1}^{n} \left( \sum\limits_{p=1}^{k} \frac{\mathbb{1}_{\{S_{\mathcal{R}_m}(p) = \mathrm{stance}(\mathcal{T})_{a,p}\}}}{N_a} \right) = \sum\limits_{p=1}^{k} \left( \sum\limits_{a=1}^{n} \frac{\mathbb{1}_{\{S_{\mathcal{R}_m}(p) = \mathrm{stance}(\mathcal{T})_{a,p}\}}}{N_a} \right)$. 
Suppose, for contradiction, that $\mathcal{R}_m$ does not satisfy SWM. Then, there exists a solution $S' \neq S_{\mathcal{R}_m}$ such that $Sat(\mathcal{T}, S_{\mathcal{R}_m}) < Sat(\mathcal{T}, S')$. Let $P'$ denote the set of paragraphs where $S'$ and $S_{\mathcal{R}_m}$ differ. For any $p \in P'$, we have two cases:
\begin{enumerate}
    \item $S_{\mathcal{R}_m}(p) = +1$ and $S'(p)=-1$. By Definition \ref{RM}, as $p \in \mathcal{R}_m(\mathcal{T})$, $p^{+}_{r} > p^{-}_{r}$ which implies that $ \sum\limits_{a=1}^{n} \frac{\mathbb{1}_{\{\mathrm{stance}(\mathcal{T})_{a,p} = +1\}}}{N_a} > \sum\limits_{a=1}^{n} \frac{\mathbb{1}_{\{\mathrm{stance}(\mathcal{T})_{a,p} = -1\}}}{N_a}$. Therefore, $Sat_p(\mathcal{T}, S_{\mathcal{R}_m}(p)= +1) > Sat_p(\mathcal{T}, S'(p)=-1)$.
    \item $S_{\mathcal{R}_m}(p) = -1$ and $S'(p) = +1$. By Definition~\ref{RM}, as $p \notin \mathcal{R}_m(\mathcal{T})$, $p^{+}_{r} < p^{-}_{r}$, similarly to case 1, leading to $Sat_p(\mathcal{T}, S_{\mathcal{R}_m}(p)=-1) > Sat_p(\mathcal{T}, S'(p)=+1)$.
\end{enumerate}

In both cases, we have $Sat_p(\mathcal{T}, S_{\mathcal{R}_m}) > Sat_p(\mathcal{T}, S')$ for all $p \in P'$. Since these are the only paragraphs where $S_{\mathcal{R}_m}$ and $S'$ differ, and $Sat_p(\mathcal{T}, S_{\mathcal{R}_m}) = Sat_p(\mathcal{T}, S')$ for all $p \notin P'$, by our decomposition of $Sat(\mathcal{T}, S)$ into paragraph-wise sums: 
$ Sat(\mathcal{T}, S_{\mathcal{R}_m}) = \sum\limits_{p=1}^{k} Sat_p(\mathcal{T}, S_{\mathcal{R}_m}) > \sum\limits_{p=1}^{k} Sat_p(\mathcal{T}, S') = Sat(\mathcal{T}, S')$. This contradicts our assumption about the existence of such a solution $S'$; therefore, $\mathcal{R}_m$ satisfies SWM.

\paragraph{Necessity.}
We prove that if an aggregation rule $\mathcal{R}$ satisfies SWM, then it must be a relative majority rule. 

Let $\mathcal{R}$ be an aggregation rule that satisfies SWM, and we assume it is not a relative majority rule. Also, consider a relative majority rule $\mathcal{R}_m$ that produces an identical solution to $\mathcal{R}$ except for paragraph $p$ given an instance $\mathcal{T}$: ${\mathcal{R}_m(\mathcal{T})} \triangle {\mathcal{R}(\mathcal{T})} = p$. 

% $S_{\mathcal{R}_m} \triangle S_{\mathcal{R}} = p$. 

As $\mathcal{R}$ and $\mathcal{R}_m$ differ in $p$ and $\mathcal{R}$ not being a relative majority rule, this means either that $p^{+}_{r} > p^{-}_{r}$ but $S_{\mathcal{R}}(p) = -1$, or $p^{+}_{r} < p^{-}_{r}$ but $S_{\mathcal{R}}(p) = +1$. In both cases, the previous section of this proof (sufficiency) showed that for all collaborative document instances $\mathcal{T} \text{ then }
\sum\limits_{a=1}^{n} \frac{\mathbb{1}_{\{S_{\mathcal{R}_m}(p) = \mathrm{stance}(\mathcal{T})_{a,p}\}}}{N_a}  > \sum\limits_{a=1}^{n} \frac{\mathbb{1}_{\{S_{\mathcal{R}}(p) = \mathrm{stance}(\mathcal{T})_{a,p}\}}}{N_a}$, while all other paragraphs have equal satisfaction.
%accuracy
Hence, $Sat(\mathcal{T}, S_{\mathcal{R}}) < Sat(\mathcal{T}, S_{\mathcal{R}_m})$ which contradicts $\mathcal{R}$ satisfying SWM.
\end{proof}
}

\subsection{Stability and Impossibility}

Recall that we are interested in rules that are not only social welfare maximizers but also stable. Next, we formalize this.

\begin{definition}[Stability]
\label{Stability}
    A rule $\mathcal{R}$ is \emph{stable} if, for every event list $E$, there exists an extension $E'$ such that for any further extension $E''$ of $E + E'$,  $\mathcal{R}(E + E') = \mathcal{R}(E + E' + E'')$.
\end{definition}

That is, for any event list $E$, there are further events $E'$ that cause the system to stabilize (such that any further events $E''$ will not change the output).
Unfortunately, no RM rule is stable.

\begin{observation}
[\appref{thm:rm not stable}]\label{thm:rm not stable}
  No RM rule is stable.
\end{observation}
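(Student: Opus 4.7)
My plan is to show that for any RM rule $\mathcal{R}$ and any starting event list $E$, no extension $E'$ yields stability: for every $E+E'$ I can produce a further $E''$ that changes the output. The leverage comes from the fact that $P$ is infinite while $P(E+E')$ is finite, so agents can always introduce a fresh paragraph $p^{\ast} \notin P(E+E')$ to perturb the state. Since $p^{\ast}$ has never been mentioned, every agent's stance on $p^{\ast}$ is $0$ and therefore the relative approval and relative disapproval for $p^{\ast}$ are both $0$; hence $\mathcal{R}$'s decision on $p^{\ast}$ in $\mathcal{R}(E+E')$ is governed purely by its (arbitrary) tie-breaking.

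I would then case-split on this tie-breaking. In the case where $p^{\ast} \notin \mathcal{R}(E+E')$, I take $E'' = \langle (a_1, p^{\ast}, +1)\rangle$; afterwards the relative approval for $p^{\ast}$ becomes strictly positive while its relative disapproval remains $0$, so by Definition~\ref{RM} the rule must include $p^{\ast}$ in $\mathcal{R}(E+E'+E'')$, and the solution strictly grows. In the opposite case, where tie-breaking places $p^{\ast} \in \mathcal{R}(E+E')$, I append events of the form $\langle (a_1, p^{\ast}, +1), (a_2, p^{\ast}, -1)\rangle$ followed (if needed) by further approvals from $a_1$ on auxiliary fresh paragraphs; this dilutes $a_1$'s $1/N_{a_1}$ contribution to the relative approval of $p^{\ast}$ until $a_2$'s disapproval contribution strictly dominates. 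Then the RM definition forces $p^{\ast} \notin \mathcal{R}(E+E'+E'')$, and the solution strictly shrinks. Either way $\mathcal{R}(E+E') \neq \mathcal{R}(E+E'+E'')$, contradicting Definition~\ref{Stability}.

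The main obstacle I anticipate is the freedom in RM's tie-breaking at $p^{\ast}$, which is precisely why the two-case split is unavoidable. A secondary, more technical concern is engineering the strict inequality in Case 2 robustly for small agent populations or for states where all $N_a$ values are comparable; my dilution trick addresses this uniformly whenever $n \geq 2$, by making $a_1$'s post-$E''$ value of $N_{a_1}$ as large as I wish. The degenerate $n=1$ case is simpler still, since that single agent can append $(a_1, p^{\ast}, +1)$ followed by $(a_1, p^{\ast}, -1)$ and thereby unilaterally flip $p^{\ast}$ from the tie state to strict disapproval.
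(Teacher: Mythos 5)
Your argument is correct, and it establishes the same negation of Definition~\ref{Stability} as the paper does -- for every $E$ and every candidate stabilizing extension $E'$ you exhibit an $E''$ that changes the output -- but the destabilizing construction is different. The paper's proof picks a paragraph $p$ that \emph{already appears} in $P(E+E')$ and simply floods it with enough opposite votes (if $p$ is included, enough agents switch to disapproval so that $p^{-}_{r}$ overtakes $p^{+}_{r}$, and symmetrically if it is excluded), which makes the argument a two-line observation with no case analysis on tie-breaking and no fresh paragraphs. You instead inject a \emph{fresh} paragraph $p^{\ast}\notin P(E+E')$, which forces you to split on how the RM rule's arbitrary tie-breaking treats a never-voted paragraph, and in the ``included by tie-breaking'' case to engineer a strict inequality $p^{\ast+}_{r} < p^{\ast-}_{r}$ via the dilution of $1/N_{a_1}$ through auxiliary approvals, plus a separate $n=1$ argument. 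What your route buys is robustness on points the paper glosses over: you never need to count how many distinct agents must switch on an existing paragraph (the paper's ``$(p^{+}+1)$ events'' is loose about relative weights and about whether that many distinct agents exist), and you correctly respect the convention that a new paragraph's first event is an approval. What it costs is length: the tie-breaking case split exists only because the model formally lets a rule include never-proposed paragraphs, a complication the paper's choice of an already-proposed $p$ avoids entirely. Both proofs are sound; the paper's is the more economical, yours the more self-contained in its bookkeeping.
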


\appendixproof{thm:rm not stable}{
\begin{proof}
Let $\mathcal{R}$ be an RM rule; let $E$ and $E'$ be event list; and $p$ a paragraph in $P(E + E')$. If $p \in \mathcal{R}(E + E')$,  define $E''$ as an event list containing $(p^+ + 1)$ events of the form $(a, p, -1)$.
%such that -- since
Since $\mathcal{R}$ is an RM rule, 
%-- 
after appending $E''$ we have $p \notin \mathcal{R}(E + E' + E'')$, thus violating stability. (Similarly, if $p \notin \mathcal{R}(E + E')$, then add $(p^- + 1)$ down-vote events for $p$ to reach a contradiction).
\end{proof}
}

Theorem~\ref{thm:RMisSWM} and Observation~\ref{thm:rm not stable} imply the following.

\begin{corollary}\label{corollary:mutually exclusive}
  No rule is both SWM and Stable.
\end{corollary}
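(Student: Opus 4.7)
The plan is to derive the corollary directly by contraposition, chaining the two previously established results. Specifically, I would assume for contradiction the existence of a rule $\mathcal{R}$ that is simultaneously SWM and Stable, and then use Theorem~\ref{thm:RMisSWM} to force $\mathcal{R}$ to lie in the class of RM rules, and finally invoke Observation~\ref{thm:rm not stable} to contradict the assumed stability.

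More concretely, the first step is to apply the necessity direction of Theorem~\ref{thm:RMisSWM}: since $\mathcal{R}$ is SWM, it must be an RM rule, meaning for every instance $\mathcal{T}$ and every paragraph $p$, inclusion in $\mathcal{R}(\mathcal{T})$ is determined by the sign of $p^{+}_{r} - p^{-}_{r}$ (with arbitrary tie-breaking). The second step is to apply Observation~\ref{thm:rm not stable}, which guarantees that for any event list $E$ and any extension $E'$, one can always construct a further extension $E''$ (consisting of sufficiently many $-1$ or $+1$ events targeting some paragraph that is currently in or out of $\mathcal{R}(E+E')$) that flips the solution. This contradicts the stability of $\mathcal{R}$.

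There is essentially no obstacle here, since both pieces of the argument have been proved earlier in the excerpt; the corollary is just their logical composition. The only minor subtlety worth flagging in the write-up is that Theorem~\ref{thm:RMisSWM} is stated as an \emph{if and only if}, so I would explicitly cite the ``only if'' (necessity) direction rather than the easier sufficiency direction, to make the logical dependence transparent. No further case analysis, construction, or calculation is required.
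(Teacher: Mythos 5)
Your proposal is correct and matches the paper's argument exactly: the paper derives the corollary by noting that Theorem~\ref{thm:RMisSWM} (its necessity direction) forces any SWM rule to be an RM rule, and Observation~\ref{thm:rm not stable} then rules out stability. Your explicit note about invoking the ``only if'' direction is a sound clarification of precisely that chain.
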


\section{Consensus-Conditioned Rules}\label{section:CCR}

Following the impossibility result proven above, next we define a rich family of rules, termed Consensus-Conditioned Rules (CCRs); this will enable the establishment of a structured design space\footnote{Taking inspiration from similar approaches for multiwinner rules~\cite{Axiomatic_Classification_and_Hierarchy}.} and the identification of rules that strike a good balance between social welfare maximization and stability.
A CCR corresponds to a \textit{consensus condition}, so paragraphs that satisfy the condition are included in the solution: a score is being computed for each paragraph, and paragraphs whose score reaches some predefined threshold are included. Formally (see also~\autoref{alg:CCR_algorithm} in Appendix~\ref{appendix:CRR_pseudo}):

\begin{definition}[Consensus Scoring Function -- CSF]\label{CSF}
  Let $\mathcal{T}=(A, E)$ be an instance. A \emph{consensus scoring function} is a function $\mathrm{CSF} : (p,\mathcal{T}) \to [0, 1]$.
\end{definition}

% Although infinitely many CSFs are possible, each primary function examined here represents a different perspective on measuring consensus.

Given some CSF, which assigns a score to each paragraph, a threshold $x \in [0, 1]$ is used to define a CCR that includes those paragraphs that pass the threshold.

% \begin{definition}[Consensus Condition -- CC]\label{CC}
%   Let $\text{CSF}$ be a consensus scoring function and $x \in [0,1]$ be a predefined threshold. Given a system instance $\mathcal{T}$, we define the pair $\bigl(\text{CSF}, x\bigr)$ as consensus condition:
%   \[
%     \forall p \in P: \bigl(\text{CSF}(p,\mathcal{T}), x\bigr) \text{ is True } \Leftrightarrow \text{CSF}(p, \mathcal{T}) \geq x
%   \]
% \end{definition}

% \paragraph{Consensus Condition (CC).} A joint pair of a CSF and a threshold $x$ ranging from 0 to 1 -- $\bigl(\text{CSF}, x\bigr)$ define a \emph{consensus condition}. This condition indicates whether a suggested paragraph $p$ has reached a sufficient level of consensus (x) for inclusion in the solution based on a consensus measure metric (CSF).

\begin{definition}[Consensus-Conditioned Rule -- CCR]\label{CCR}
  Given an instance $\mathcal{T} = (A,E)$, a CSF and a threshold $x \in [0, 1]$, the CCR rule $\mathrm{CCR}_{[\mathrm{CSF}, x]}$ satisfies:
    $p \in \mathrm{CCR}_{[\mathrm{CSF}, x]}(\mathcal{T})
    % E)
    \text{ iff } \mathrm{CSF}(p, \mathcal{T}) \geq x$.
\end{definition}

\subsection{Static CCRs}

Next, we articulate a subfamily of CCRs, referred to as \emph{static CCRs}. These include all RM rules and will be useful for our understanding of stability.
Essentially, static CCRs correspond to CSF that operate only on the stance matrix (and not on the complete event list, so that only the last stance of each voter on each paragraph matters).\footnote{Following the logic of C1, C2, and C3 rules~\cite{CondorcetSCF,HandbookCOSMOC}, operating on different input refinements.}

\begin{definition}[static CSF and CCR]\label{staticCSF}
  A \emph{static-CSF} is a CSF for which $\mathrm{CSF}(p, \mathcal{T}) = \mathrm{CSF}(p, \mathcal{T'})$ for any $\mathcal{T}$ and $\mathcal{T'}$ such that $\mathrm{stance}(\mathcal{T}) = \mathrm{stance}(\mathcal{T'})$.
  Consequently, a \emph{static-CCR} is a CCR corresponding to a static CSF.\footnote{Indeed, since the stance matrix is invariant to event order permutations, so are static CCRs.} (If a CSF use only $E$ then we sometimes denote $\mathrm{CSF}(p, E)$ and not $\mathrm{CSF}(p, \mathcal{T})$.)
  % Let $\mathcal{T} =(A, E)$ be a document writing instance, and $\mathrm{stance}(\mathcal{T})$ its stance. A \textbf{static-CSF}, denoted as $C_s$, is a CSF function such that $\forall p \in P\text{, } C_s : (p,\{-1, 0, 1\}^{n \times m}) \to [0, 1]$.
\end{definition}

% \begin{remark}
% %
% A quick observation is that the stance matrix is unaffected by permutations of the order of events in a given event list; similarly, static CCRs are invariant to such permutations.
% %
% \end{remark}

\paragraph{Static CSFs.} We provide some natural static CSFs:
\begin{description}

\item
\textbf{Approval Proportion Score (APS)}:
\begin{equation}\label{eq:aps}
\text{APS}(p, E) = \frac{p^+}{p^+ + p^-}
\end{equation}
I.e., APS computes the percentage of upvotes.

\item
\textbf{Relative Approval Proportion Score (RAPS)}:
\begin{equation}\label{eq:raps}
\text{RAPS}(p, E) = \frac{p^{+}_{r}}{p^{+}_{r} + p^{-}_{r}}
\end{equation}
I.e., RAPS computes the percentage of relative approval votes.

\item
\textbf{$\beta$-Relative Absolute Majority Score ($\beta$-RAMS)}:
\begin{equation}\label{eq:rams}
\text{RAMS}(p, E) = 
    \begin{cases} 
    \text{RAPS}(p, E) & \text{if } p^+ \geq \beta \cdot N_a \\
    0 & \text{otherwise}
    \end{cases}
\end{equation}
That is, RAMS returns the RAPS value only if the proposal meets a minimum approval proportion (i.e., $\beta$), otherwise assigning zero (thus, it excludes paragraphs with insufficient participation).
 
\end{description}

% \subsection{Static Consensus-Conditioned Rule (static-CCR)}

% Next, we examine in depth the concept of defining a single static consensus condition, termed the static-CC rule, which uses the scores resulting from the static CSFs presented above.

% \begin{definition}[static-CCR]
%     Given an event list $E$, static-CSF $C_s$ and threshold value $x$ defining a Condition $(C_s(p, \mathrm{stance}(\mathcal{T})), x)$, the \emph{static-CCR} rule determines that for each paragraph $p \in P(E)$ if $C(p, \mathrm{stance}(\mathcal{T})) \geq x$ then $p \in \mathcal{R}(E)$.
% \end{definition}

% \begin{table}[t]
%     \centering
%     \caption{\label{tab:static_example_solutions} ($\mathcal{R}(E)$) for the static CCRs of Example~\ref{example:static ccr}.}
%     \begin{tabular}{c|c|c}
%         {Static-CCR}           & {$\mathcal{R}(E)$}         & $S_{\mathcal{R}}$   \\ \h
%         ($APS(p, E), 0.5$)  & $\{p_1, p_2, p_3, p_4\}$ & $\langle 1,1,1,1\rangle$    \\
%         ($RAPS(p, E), 0.5$)       & $\{p_1, p_2, p_4\}$ & $\langle 1,1,0,1 \rangle$ \\
%         ($0.25-RAMS(p, E), 0.5$)  & $\{p_1, p_4\}$ & $\langle 1,0,0,1\rangle$    \\
%         ($0.5-RAMS(p, E), 0.5$)   & $\{ \}$         & $\langle 0,0,0,0 \rangle$
%     \end{tabular}
% \end{table}

\begin{example}\label{example:static ccr}
Consider an instance $\mathcal{T}$ with the following event list:
\begin{align*}
  E = \langle & (a_1, p_1, +1), (a_2, p_1, +1), (a_3, p_1, +1), (a_1, p_2, +1) \\
              & (a_4, p_1, -1), (a_5, p_1, -1), (a_2, p_3, +1), (a_1, p_3, +1) \\
              & (a_3, p_3, -1), (a_3, p_4, +1), (a_1, p_4, +1), (a_5, p_4, -1) \\
              & (a_2, p_4, +1), (a_4, p_4, +1), (a_2, p_4, 0),  (a_4, p_3, -1) \rangle.
\end{align*}
The stance of this instance is described in Table~\ref{tab:stance_matrix}. Accordingly, the scores of each paragraph $p \in P$ -- corresponding to different static CSFs and the solutions -- corresponding to these CSFs with a threshold $x = \nicefrac{1}{2}$ -- are given in Table~\ref{tab:static_example}.
\end{example}

\begin{table}[t]
    \caption{\label{tab:stance_matrix} Stance matrix  of the instance presented in Example~\ref{example:static ccr}.}
    \centering
    \begin{tabular}{c@{\hspace{4mm}}cccc}
    \toprule
    \multirow{1}{*}{Agent / Paragraph} & \multirow{1}{*}{$p_1$} & \multirow{1}{*}{$p_2$} & \multirow{1}{*}{$p_3$} & \multirow{1}{*}{$p_4$} \\    
    \midrule
    $a_1$ & +1 & +1 & +1 & +1 \\
    $a_2$ & +1 & 0 & +1 & 0 \\
    $a_3$ & +1 & 0 & -1 & +1 \\
    $a_4$ & -1 & 0 & -1 & +1 \\
    $a_5$ & -1 & 0 & 0 & -1 \\
    \bottomrule
    \end{tabular}
\end{table}

\begin{table}[t]
    \caption{\label{tab:static_example} Static CSFs and CCRs of the instance of Example~\ref{example:static ccr}.} 
    \begin{tabular}{l@{\hspace{4mm}}cccc@{\hspace{4mm}}c}
    \toprule
     \multirow{1}{*}{Static-CSF} &  \multirow{1}{*}{$p_1$}  &  \multirow{1}{*}{$p_2$}   &  \multirow{1}{*}{$p_3$}    &  \multirow{1}{*}{$p_4$}   &  \multirow{1}{*}{$ [\text{Static-CSF}, 0.5]$}\\ 
    \midrule
    $\mathrm{APS} $          & $0.60$ & $1.00$  & $0.50$   & $0.75$  & $\{p_1, p_2, p_3, p_4\}$\\
    $\mathrm{RAPS}$         & $0.52$ & $1.00$  & $0.47$   & $0.65$  & $\{p_1, p_2, p_4\}$\\
    $0.25\mathrm{-RAMS}$    & $0.52$ & $0.00$  & $0.47$   & $0.65$  & $\{p_1, p_4\}$\\
    $0.75\mathrm{-RAMS}$     & $0.0$  & $0.00$ & $0.00$  & $0.0$    & $\{ \}$\\
    \bottomrule
    \end{tabular}
\end{table}

The static-CCR $\mathrm{CCR}_{[\mathrm{RAPS}, 0.5]}$ is an RM rule and thus maximizes social welfare.
As for stability, next we show that no static CCR is stable (we assume an additional notion -- \textit{versatility} -- to rule out naive cases, e.g., a static-CCR that excludes all paragraphs and is thus trivially stable; versatility requires $\mathcal{R}$ to be able to produce any output).

\begin{definition}[Versatility]
\label{definition:versatility}
An aggregation rule $\mathcal{R}$ satisfies \emph{Versatility} if, for each subset of paragraphs $P' \subseteq P$, there exists an instance $\mathcal{T}$ such that $\mathcal{R}(\mathcal{T}) = P'$.
\end{definition}

% This axiom ensures that the aggregation method $\mathcal{R}$ can produce any subset of paragraphs as the solution. 

\begin{theorem}[\appref{thm:NoStable}]\label{thm:NoStable}
  No versatile static CCR rule is stable.
\end{theorem}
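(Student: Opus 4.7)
The plan is to adapt the strategy used in the proof of Observation~\ref{thm:rm not stable}, replacing the specific structure of RM rules with an appeal to versatility. Since a static CCR depends on an instance only through its stance matrix (Definition~\ref{staticCSF}), and versatility forces the induced map $\mathrm{stance}\mapsto\mathcal{R}$ to take more than one value, any candidate ``stable'' extension can be disrupted by appending events that steer the stance matrix toward the witness of a different output.

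Concretely, assume for contradiction that $\mathcal{R}=\mathrm{CCR}_{[\mathrm{CSF},x]}$ is static, versatile, and stable. Fix any event list $E$ and let $E'$ be the extension guaranteed by stability; set $S:=\mathcal{R}(A,E+E')$ and let $M$ denote the stance matrix of $(A,E+E')$. By versatility applied to some $P'\neq S$ (take $P'=\emptyset$ if $S\neq\emptyset$, otherwise a singleton), there is an event list $\tilde E$ over $A$ with $\mathcal{R}(A,\tilde E)=P'$; let $\tilde M$ be its stance matrix.

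I then construct $E''$ so that $(A,E+E'+E'')$ has stance matrix exactly $\tilde M$. Cell by cell, for every $(a,p)$ with $M_{a,p}\neq \tilde M_{a,p}$ I append the event $(a,p,\tilde M_{a,p})$; for paragraphs $p\in P(E+E')\setminus P(\tilde E)$ I append $(a,p,0)$ for each agent $a$ currently holding a non-zero stance on $p$; and for paragraphs $p\in P(\tilde E)\setminus P(E+E')$, where the first event on $p$ must be $+1$, I choose as proposer an agent with $\tilde M_{a,p}=+1$ if one exists, and otherwise an arbitrary $a_0$ followed later by a corrective $(a_0,p,\tilde M_{a_0,p})$. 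Since only the last event on each pair $(a,p)$ determines its stance, the resulting stance agrees with $\tilde M$ on every cell of the (infinite) stance matrix. By staticity, $\mathcal{R}(A,E+E'+E'')=\mathcal{R}(A,\tilde E)=P'\neq S$, contradicting the stability of $E'$.

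The main obstacle is not conceptual but combinatorial: correctly bookkeeping $E''$ under the constraint that the first recorded event on any freshly introduced paragraph is an approval, and verifying that both stance matrices coincide on paragraphs that lie outside $P(E+E')\cup P(\tilde E)$ (where both are all-zero, so staticity handles them automatically). The heart of the argument is simply that versatility prevents any stance matrix from being ``absorbing'' under further events.
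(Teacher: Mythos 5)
Your proof is correct and follows essentially the same route as the paper's: take the stability extension $E'$, then append events that transform the stance matrix into that of a versatility witness for a different output, and invoke staticity to force a contradiction with stability. The only cosmetic difference is that the paper resets all stances to zero via an ``Undo'' list and then replays the witness event list (which automatically respects the first-event-is-an-approval constraint), whereas you edit the stance matrix cell by cell and therefore need the extra bookkeeping for freshly introduced paragraphs -- which you handle correctly.
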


\begin{proof}\label{proof_SI_Stable}
Towards a contradiction, let $\mathcal{R}$ be a versatile static-CCR rule that is stable.
Consider an initial event list $E_1 = \emptyset$. By stability, there exists an event list $E_1^{'}$ such that for any additional event list $E_1''$, $\mathcal{R}(E_1 + E_1^{'}) = \mathcal{R}(E_1 + E_1^{'} + E_1^{''}) = P_1$.
I.e., for the event list $E_1$ exists another event list $E_1'$ such that the aggregation method $\mathcal{R}$ will produce the same solution $P_1$ (a list of paragraphs) regardless of any additional events $E_1^{''}$ occurring in the future.
Let $P_2$ represent another possible solution, different from $P_1$ (i.e., $P_1 \neq P_2$). By versatility, there is an event list $E_2$ with $\mathcal{R}(E_2) = P_2$. Thus, the aggregation method $\mathcal{R}$ can produce a different solution $P_2$ for some event list $E_2$.
Define $E_1^{'''}$ to be the so-called ``Undo'' event list of $(E_1 + E_1^{'})$: this means that after adding the events in~$E_1^{'''}$ to the event list $(E_1 + E_1^{'})$, the extended event list has the same stance matrix, consisting of only abstention votes. Therefore, $\mathrm{stance}(E_1 + E_1^{'}+E_1^{'''})_{a, p} = 0$ for all $a$ and $p$.
Consider now the event list $E^{*} = E_1 + E_1^{'} + E_1^{'''} + E_2$, $\mathrm{stance}(E^{*}) = \mathrm{stance}(E_2)$. According to Stability, $\mathcal{R}(E^{*}) = P_1$. Since $\mathcal{R}$ is a static-CCR, and as $\mathrm{stance}(E^{*}) = \mathrm{stance}(E_2)$, we get that $\mathcal{R}(E^{*}) = \mathcal{R}(E_2)= P_2$. This, however, contradicts the fact that $P_1 \neq P_2$.
\end{proof}

\appendixproof{thm:NoStable}{
\begin{figure}[ht]
    \centering
    \includegraphics[width=1\columnwidth]{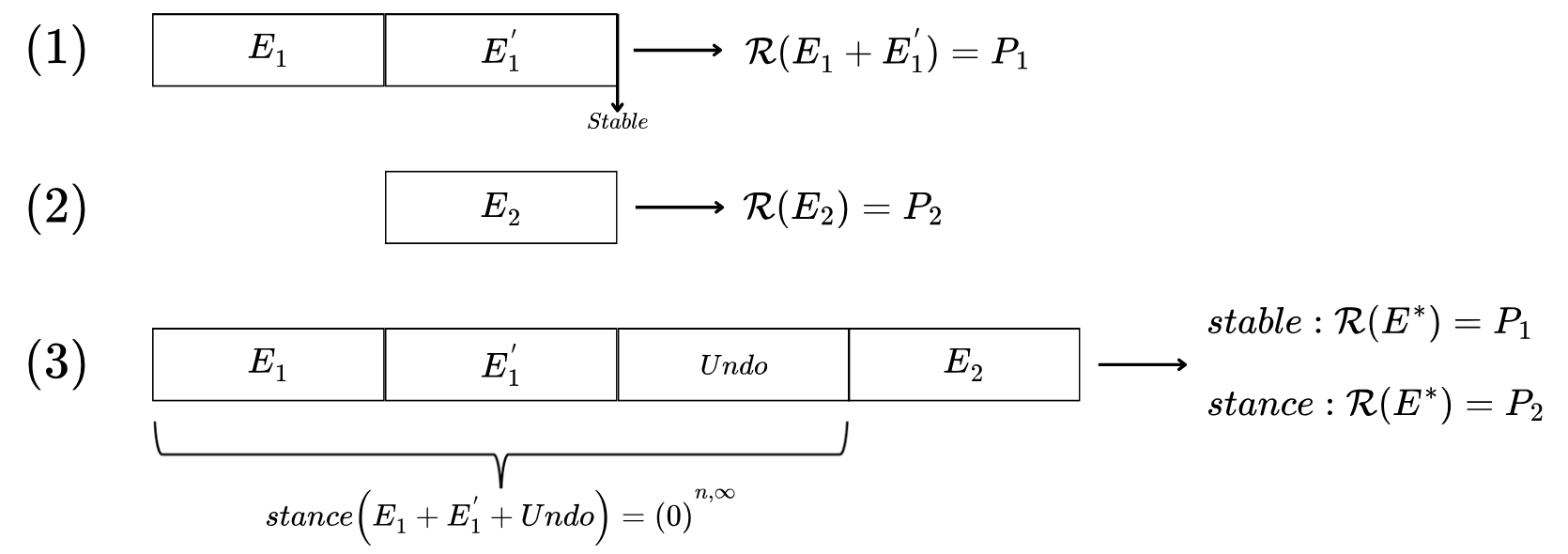} 
    %\caption{Schematic representation of the proof of Theorem~\ref{thm:NoStable}.}
    \caption{Schematic representation of the proof of \texorpdfstring{Theorem~\ref{thm:NoStable}}{Theorem}.}
    \label{fig:flow_of_NoStable}
    \nicespace
\end{figure}
}

% \begin{corollary}
%     No static-CCR can be both stable and SWM.
%     \nimrod{Yeah, give a proof for the corollary -- in it, explain why it follows from the stuff above}
% \end{corollary}

% \paragraph{Trade-Off}

% \begin{theorem}\label{thm:mutually_exclusive}
%     No static-CR can be both stable and SWM.
% \end{theorem}

\subsection{Dynamic CCRs}

Following Theorem~\ref{thm:NoStable}, and as stability is crucial for our setting, we go on to consider non-static CCRs.
Correspondingly, we consider CCRs corresponding to non-static CSFs (recall Definition~\ref{staticCSF}) -- in particular, these rules use not only the stance matrix to assign scores to paragraphs, 
%such that assign scores to paragraphs not only based on the stance matrix, 
but also the event list itself. 
As a result, the aggregation results are influenced by the specific sequence and composition of events rather than just their cumulative summary. 
%
% \begin{definition}[Dynamic-CSF]
%     \label{def:dynamicCSF}
%     Let $\mathcal{T} =(A, E)$ be a document writing instance. A \textbf{dynamic-CSF}, denoted as $C_d$, is a CSF function such that
%     $\forall p \in P \text{, } C_d : (p, E) \to [0, 1]$.
% \end{definition}
%
% In contrast to static-CSF, dynamic functions calculate the score for each candidate paragraph dynamically, considering the intricate details of the event list itself. This dynamic scoring approach brings about a crucial distinction – the aggregation results are influenced by the specific sequence and composition of events rather than just their cumulative summary. 
%
% \begin{theorem}
%     \label{thm:static_contained_in_dynamic}
%     The class of static-CCR aggregation rules is strictly contained in the class of dynamic-CCR rules.
% \end{theorem}
%
% \begin{proof}
%     Let $\mathcal{R}_d$ be a \emph{dynamic-CCR} rule defined by a dynamic-CSF noted as $C_d$ and threshold $x$. We can define a new static-CSF noted as $C_s$, such that $C_s(p, \mathrm{stance}(\mathcal{T})) = C_d(p, E)$ because given any event list $E$ the summary $\mathrm{stance}(\mathcal{T})$ can be calculated. As a result, we can represent the matching \emph{static-CCR} rule as $\mathcal{R}_s := (C_s,x)$ for any \emph{dynamic-CCR} rule.
% \end{proof}
%
(Note that, indeed, the class of static-CCRs is strictly contained in the class of dynamic-CCRs; e.g., a non-static CCR that is versatile and stable -- contrasting Theorem~\ref{thm:NoStable}, is given in Appendix~\ref{appendix:versatile_stable}.)

\subsection{Dynamic Properties}

% The example above implies that, indeed, not all CCRs are static -- and demonstrates a stable CCR.
%
Next, we consider a rich family of CCRs that are non-static (and thus, may be stable) to establish a structured design space of rules to evaluate.
The intuitive idea is to consider static CSFs together with functions that transform these static CSFs into dynamic ones by incorporating a control parameter $param$. This parameter relates to dynamic properties of the event list~$E$ (or the solution $\mathcal{R}(E)$) -- e.g., the number of events ($|E|$) -- and is used to account for the progression of this $param$’s value when assigning scores to paragraphs.
% and is being used to account the progression of the value of $param$ when assigning scores to paragraphs. 
As we will see, this will enable us to consider rules that, e.g., as time progresses (i.e., as events are added to the event list), require a higher and higher vote majority for a paragraph to be included in the solution.
Some examples of dynamic parameters are:
\begin{itemize}

\item
\textit{Form parameters}:
number of events $|E|$. %(corresponding to the interaction level in the system).

\item
\textit{Content parameters}:
number of agents involved $|A(E)|$; number of proposed paragraphs $|P(E)|$.

\item
\textit{Document parameters}:
number of paragraphs in the solution $|\mathcal{R}(E)|$; number of updates $|U(\mathcal{R}, E)|$ (defined next).

\begin{definition}[Number of updates]\label{definition:updates}
  Given an event list $E = \langle e_1, e_2, \ldots, e_{|E|} \rangle$ and an aggregation rule $\mathcal{R}$, let $\mathcal{R}^*$ be the sequence of solutions produced by $\mathcal{R}$ for all event prefixes (i.e., $\mathcal{R}^* = \langle \mathcal{R}(\langle e_1 \rangle), \mathcal{R}(\langle e_1, e_2 \rangle), \ldots, \mathcal{R}(E) \rangle$); then, the \textbf{number of updates} of rule $\mathcal{R}$ for event list $E$ is defined as:
  \[
    U(\mathcal{R}, E) = \left| \left\{ i \in [|E| - 1] : \mathcal{R^*}(i + 1) \neq \mathcal{R^*}(i) \right\} \right|.
  \]
\end{definition}

\end{itemize}

Next, we discuss how to incorporate the values of these parameters in the design of the aggregation rules: we distinguish between \textit{harsh rules} and \textit{smooth rules}.

% Consequently, we can distinguish between inclusion and omission updates and accordingly create count properties - the \emph{number of inclusions} and \emph{number of omissions}.
% \end{itemize}

\subsection{Harsh Rules}\label{section:dynamizers}

% \begin{definition}[Dynamizer]
% \label{Dynamizer}
%     Let $C_s$ be a static-CSF and $t$ be a natural parameter. A \textbf{Dynamizer} is a dynamic-CSF parameterized by $t$ and noted as $t$-$C_s$. This parameter $t$ defines a dynamic property in the static function that governs the subset of events the function considers. Formally,
%     \[
%         t\text{-}C_s : (p, E, t) \to [0, 1]
%     \]
% \end{definition}\nimrod{I've found this definition very hard to read somehow..}

% An important note is that when $t \rightarrow \infty$, $t\text{-}C_s$ becomes identical to $C_s$, thus the CSF returns to be static \nimrod{I don't see this implied by the definition}. 

We begin with the class of so-called \emph{harsh rules} (that correspond to \emph{harsh-CSFs}, a class of non-static CSFs) -- intuitively, these rules completely (harshly) stabilize as soon as the dynamic parameter at hand reaches a predefined value.
Definition~\ref{definition:harsh_param} formalizes this notion.
% Consider the following definition.

% Concerning control over dynamic properties, We want to differentiate between two type: \textbf{\emph{Harsh and Smooth}} Dynamizers. A \emph{Harsh} dynamizer involves toughening the scoring mechanism at a specific point determined by the community. Conversely, a \emph{Smooth} dynamizer entails a more gradual and subtle toughening of the scoring system.

\begin{definition}
\label{definition:harsh_param}
Let $E = \langle e_1, \ldots, e_n\rangle$ be an event list, $param$ a dynamic parameter, and $t$ a natural number. Let $j$ be the smallest index such that $param(\langle e_1, \ldots, e_j \rangle) = t$, and $\infty$ if no such index exists.
Then, let $E\!\downarrow_{param, t}$ be $\langle e_1, \ldots, e_j \rangle$.
\end{definition}

That is, $E\!\downarrow_{param, t}$ is the prefix of $E$ corresponding to the first event for which the value of $param$ reaches $t$. With this, we are ready to define harsh CSFs (HCSFs).

\begin{definition}[Harsh CSF]
\label{definition:HarshD}
    Let $C_s$ be a static-CSF, $param$ a dynamic parameter, and $t$ a natural number. Then, the \emph{harsh-CSF} $\mathrm{HCSF}_{[\mathrm{C_s}, param, t]}$ is a dynamic-CSF with $\mathrm{HCSF}_{[\mathrm{C_s}, param, t]}(p, E) = \mathrm{C_s}(p, E\!\downarrow_{param, t})$.
\end{definition}

\begin{example}
The harsh-CSF $\mathrm{HCSF}_{[\mathrm{APS}, |E|, 7]}$ computes, for each paragraph, its upvote
%up vote
fraction, but considering only the first $7$ events. Similarly, the harsh-CSF $\mathrm{HCSF}_{[\mathrm{APS}, |U|, 7]}$ computes the 
%up vote
upvote fraction, but only up until the $7$th output change event.
\end{example}

% - the dynamic-CS function takes into account only the first $t$ events. Formally, for all paragraph $p\in P$ and any event list $E$:
%     \[
%         HarshD(C_s, |E|_t) := C_s(p, \mathrm{stance}(E[t_{events}]))
%     \]
%     where $|E|_t$ is the number of events index and $E[t_{events}] = \langle e_1, ...,e_t \rangle$ the sub-list of the first $t$ events.
    
%     \item $HarshD(C_s, |U|_t)$ - the dynamic-CS function considers the first $t$ solution list updates (see Definition~\ref{definition:updates}). Formally, for all paragraph $p\in P$ and any event list $E$:
%     \[
%         HarshD(C_s, |U|_t) := C_s(p, C_s(p, \mathrm{stance}(E[t_{updates}])))
%     \]
%     where $|U|_t$ is the number of updates index and $E[t_{updates}] = \langle e_1, ...,e_i\rangle$ the sub-list of $E$ containing all events up to and including the $t$-th update occurring in event $i$. 

%     \item $HarshD(C_s, |P(E)|_t)$ - the dynamic-CS function dynamically acknowledges all events until there are $t$ paragraphs suggested. Formally, for all paragraph $p\in P$ and any event list $E$:
%     \[
%         HarshD(C_s, |P(E)|_t) := C_s(p, \mathrm{stance}(E[t_{suggestions}])) 
%     \]

%     where $|P(E)|_t$ is the number of suggestions index and $E[t_{suggestions}] = \langle e_1, ...,e_i\rangle$ the sub-list of $E$ containing all events up to and including the $t$-th update occurring in event $i$. 
% \end{description}

Indeed, some CCRs corresponding to harsh CSFs are stable (consider, e.g., $\mathrm{HCSF}_{[\mathrm{APS}, |E|, 7]}$ -- that completely stabilizes after $7$ events).
%; this, however, is also their disadvantage -- their use of the dynamic parameter $param$ is too strict (consider also work on constitution stability~\cite{abramowitz2021amend,FoundingAmendingConstitution}). This also makes them susceptible to adversarial actions: e.g., 
However, their stability-enforcing rigidity is also a disadvantage -- the strict use of the dynamic parameter $param$ makes them susceptible to adversarial actions (consider also work on constitution stability~\cite{abramowitz2021amend, FoundingAmendingConstitution}).
For instance, an agent sending $7$ events for a CCR that corresponds to $\mathrm{HCSF}_{[\mathrm{APS}, |E|, 7]}$ prevents all other agents from affecting the solution. (From a practical point of view, though, Harsh rules are important when we absolutely need to take an action at a certain point in time, e.g., for certain budgeting decisions.)

\subsection{Smooth Rules}\label{section:Smooth_def}
Smooth rules utilize the $param$ value and are less harsh than harsh rules. This is achieved by a \emph{smoothing function}, parameterized by a \emph{smoothing parameter} $\alpha\in [0, 1]$ that modulates the impact of the $param$ value on the consensus score: a higher $\alpha$ means a harsher rule, 
%and
while a lower~$\alpha$ means a smoother dependence on the $param$ value. The goal is to explicitly adjust the \emph{trade-off} between social welfare and stability. 
We begin with the following definition.

\begin{definition}[Dynamic Smoothing Function]
\label{SmoothD_Function}
    Let $F_\alpha: [0,1] \times \mathbb{N} \to [0,1]$ be a family of functions parameterized by $\alpha \in [0,1]$. A \textbf{Dynamic Smoothing Function}, denoted as $F_\alpha(x, t)$, is defined for each $\alpha \in [0,1]$, where $x \in [0,1]$ and $t\in \mathbb{N}$. 
\end{definition}

We are ready to define Smooth CSFs (SCSFs).

\begin{definition}[Smooth CSF]
\label{definition:smoothD}
    Let $F_{\alpha}$ be a smoothing function,
    $C_s$ be a static-CSF, $param$ a dynamic parameter, and $t$ a natural number.
    Then, $\mathrm{SCSF}_{[\mathrm{C_s}, param, t, F_{\alpha}]}$ is a dynamic-CSF. Formally, for all paragraph $p\in P$ and any event list $E$: $\mathrm{SCSF}_{[C_s, param, t, F_{\alpha}]}:= F_{\alpha}(C_s(p, E), t_{param}) \to [0, 1]$.
\end{definition}

The smoothing function modulates the consensus values, depending on the $param$ value $t$. Intuitively, we are looking for smoothing functions that are: (1) monotonically decreasing in $t$; (2) monotonically decreasing in $\alpha$; (3)  monotonically increasing in~$x$; and (4) returning $0$ only if the initial score is $0$ (i.e., not completely stabilizing). 
Next, we proceed with one type -- exponential smoothing functions -- as they effectively model a gradual increase with asymptotic behavior as the system evolves through $param$.
In this full version, we formally define these properties and prove that exponential smoothing functions, described next, satisfy them (see Appendix~\ref{appendix:appendix_Smoothing_functions}).

\begin{description}
    
    \item
    \textbf{Exponential Smoothing Functions}:
    \[
    F^{exp}_{\alpha}(x, t) = x \cdot e^{-t \cdot \alpha \cdot (1 - x)}
    \]
\end{description}

%
% Intuitively, we are looking for a smoothing function that satisfy  to qualify as a consensus scoring function (CSF), ensuring both the integrity of its values and a coherent relationship with the defining parameters. The following characteristics out these necessary conditions:
%

% Recognizing that different smoothing functions produce varying relationships between the scores assigned by the consensus scoring function (CSF) and the dynamic features based on the mathematical structure of the function, it becomes essential to explore multiple functions. This examination allows for a deeper understanding of how the parameters influence the scoring mechanism and its dynamic behavior. 

\begin{example}
Let $param$ be the number of events ($|E|$); use an $exp$ smooth function with $\alpha = 0.1$ ($F^{\exp}_{\alpha = 0.1}$); and consider the static-CSF $\mathrm{APS}$. The corresponding dynamic-CSF is: $\mathrm{SCSF}_{[\mathrm{APS}, |E|, t, F^{\exp}_{\alpha = 0.1}]}(p, E) = \mathrm{APS}(p, E) \cdot e^{-t_{|E|} \cdot 0.1 \cdot (1 - \mathrm{APS}(p, E))}$.
E.g., using $p_1$ from Example~\ref{example:static ccr}, with $\mathrm{APS}(p_1, E) = 0.60$ (see Table~\ref{tab:static_example}) and $t_{|E|} = 16$: 
$\mathrm{SCSF}_{[\mathrm{APS}, |E|, t, F^{\exp}_{\alpha = 0.1}]}(p_1, E) = 0.60 \cdot e^{-16 \cdot 0.1 \cdot (1 - 0.60)} = 0.316$.
Hence, the stance of $p_1$ -- (3+, 2-) -- no longer meets the inclusion criteria of $0.5$ (see Table~\ref{tab:stance_matrix}).
\end{example}

% We go on to do simulation-based analysis of these smooth rules.
% ; note that, indeed analytically it is too hard, as there are too many moving parts.

% As HarshD scoring functions are not satisfactory (due to their hachability), we wish to relax the stability constraint and consider more fine-tuned methods. In particular, we observe the inherent \emph{trade-off} between fully dynamic and stability, and we wish to find rules that strike good balance.

\section{Experimental Analysis}\label{section:simulations}

We report on simulations examining the satisfaction–stability tradeoff across rules and settings.

% The theoretical analysis above led us to consider the specific aggregation rules defined above. Next, we analyze them using simulation.
%
% We discuss the experimental design (Section~\ref{section:experimental_design}) and then report on the results (Section~\ref{section:experimental_results}).

\subsection{Experimental Design}\label{section:experimental_design}

We describe the experimental setup and the input settings used in our simulations 
(see Figure~\ref{fig:simulation_overview} in Appendix~\ref{appendix:experiments} for an illustration):
\begin{itemize}

\item
\textbf{Context} -- either a fixed set of possible paragraphs out of which the simulated collaborative constitution is to be written; or, in more advanced simulations, a certain thematic topic.

\item
\textbf{Population} -- we utilize several agent modeling approaches to simulate different populations interacting to create a collaborative text document: unstructured (Section~\ref{section:unstructured_population}), Euclidean (Section~\ref{section:euclidean_population}), and LLM-based (Section~\ref{section:textual_population}). 

\item
\textbf{Scheduler} -- this entity \emph{runs the show}: it interacts with the agents, essentially deciding on the order by which different population members interact with the system (a useful and standard way of dealing with decentralized systems like ours).

\end{itemize}

Later on, it will be useful to describe the specific contexts, populations, and schedulers that we use in our simulations. Next, given such an event list, we continue the abstract and general description: 

\paragraph{Aggregation Rules.} As we constructed three types of CCRs (static, harsh, and smooth), we observe the following rules with threshold $x =\nicefrac{1}{2}$.\footnote{We chose $threshold = \nicefrac{1}{2}$ as, in its simplest form, it corresponds to majority voting; note also that, while other values of $threshold$ makes sense as well, different CSFs can simulate equivalent rules (thus, there is no loss of generality here, but mainly convenience).}
Three static-CSFs are evaluated: $\mathrm{CSF} \in$ \{APS, RAPS, $\beta$-RAMS\}, with $\beta \in \{0.05, 0.1, 0.3, 0.5\}$. The dynamic property types are $param \in $ \{events ($|E|$), paragraphs ($|P(E)|$)\}. For Harsh-CCRs, temporal cutoffs are set as $t \in \{0, 50, 100, 150\}$. For Smooth-CCRs, we apply exponential smoothing functions $F^{\text{exp}}_{\alpha}$ with $\alpha \in \{0.1, 0.3, 0.5, 1.0\}$. In total, initially 54 distinct CCR rule configurations are evaluated, and later on, 64 for the LLM-based simulation model. (Note that, for space considerations, we do not consider all parameter combinations.)

\mypara{Evaluation Metrics}
We focus on evaluation metrics that relate to the \textbf{stability} of the process and the \textbf{satisfaction} with the output. We use the following metrics, normalized to $[0, 1]$:
%
% \paragraph{Stabilization Metrics.} Measure the extent to which the solution variance decreases as the number of events increases, indicating convergence toward a final document. We examined the following stabilization metrics:
%
\begin{itemize}
    \item \emph{Average Number of Document Updates:} The average number of times the document is updated (see Definition~\ref{definition:updates}):
$stability_{[updates]}(\mathcal{T}, S) = \frac{|E| - U(\mathcal{R}, E)}{|E|}$.
    \item \emph{Normalized social welfare:} The community's satisfaction $Sat_A(\mathcal{T}, S)$ (see Definition~\ref{Solution_Satisfaction}) is divided by the number of active agents $ N_{active} := \sum_{a=1}^{n} \mathbb{1}_{\{\exists p : \mathrm{stance}(\mathcal{T})_{a,p} \neq 0\}}$.
    % hence:
      % \[
    % Sat_A(\mathcal{T}, S) = \frac{\sum\limits_{a=1}^{n} Sat_a(\mathcal{T}, S)}{\sum_{a=1}^{n} \mathbb{1}_{\{\exists p : \mathrm{stance}(\mathcal{T})_{a,p} \neq 0\}}}
  % \]
\end{itemize}

% We have considered various populations, discussed below: unstructured, Euclidean, and LLM-based.
% %
% \begin{itemize}

% \item
% \emph{Unstructured}: Here, there are no structural restrictions or imposition on the agents; essentially, agent act uniformly at random. While surely not realistic, this very simplistic setting highlights some of the basic properties of our setting.

% \item
% \emph{Euclidean}: Here, paragraphs are artificially embedded in a Euclidean space; this corresponds to, say, writing a text document on some one-dimensional topic such as tax.

% \item
% \emph{Textual}: Here, agents are modeled using LLMs, and reason about texts, following certain prompts.

% \end{itemize}

\subsection{Unstructured Population}\label{section:unstructured_population}

The setting here is as follows:
\begin{itemize}

\item
\textit{Context}: we create a set of paragraphs $P$, referred to by their index~$p_i$. (Here, the agent population is agnostic to text.)

\item
\textit{Population}: each agent, when approached by the scheduler, can trigger one of three events with equal probability. The first type of event is a new suggestion case where the agent suggests a new paragraph (with index $max + 1$). The second is a preference vote event, in which the agent randomly uniformly at random selects an existing paragraph and casts a preference vote, either up-voting or down-voting. The third event allows the agent to revisit a previously preference-voted paragraph and cast an avoidance vote, changing its preference to neutral (0).

%Population: we defined agents where each agent, when approached by the scheduler, operates as follows: with probability $p_1$, it suggests a new paragraph (with index $max + 1$; with the complement probability, it select an existing paragraph $p$ uniformly at random, and then, with probability $p_2$ it likes it and with probability $1 - p_2$ it dislikes it.%

\item
\textit{Scheduler}: in each iteration, it selects an agent uniformly at random to generate a vote. After receiving a generated event, the scheduler adds the event to the list until it gets $t$ events. 

\end{itemize}

\begin{figure}[t]
    \centering
    \includegraphics[width=0.45\textwidth, height=0.3\textheight, keepaspectratio]
    {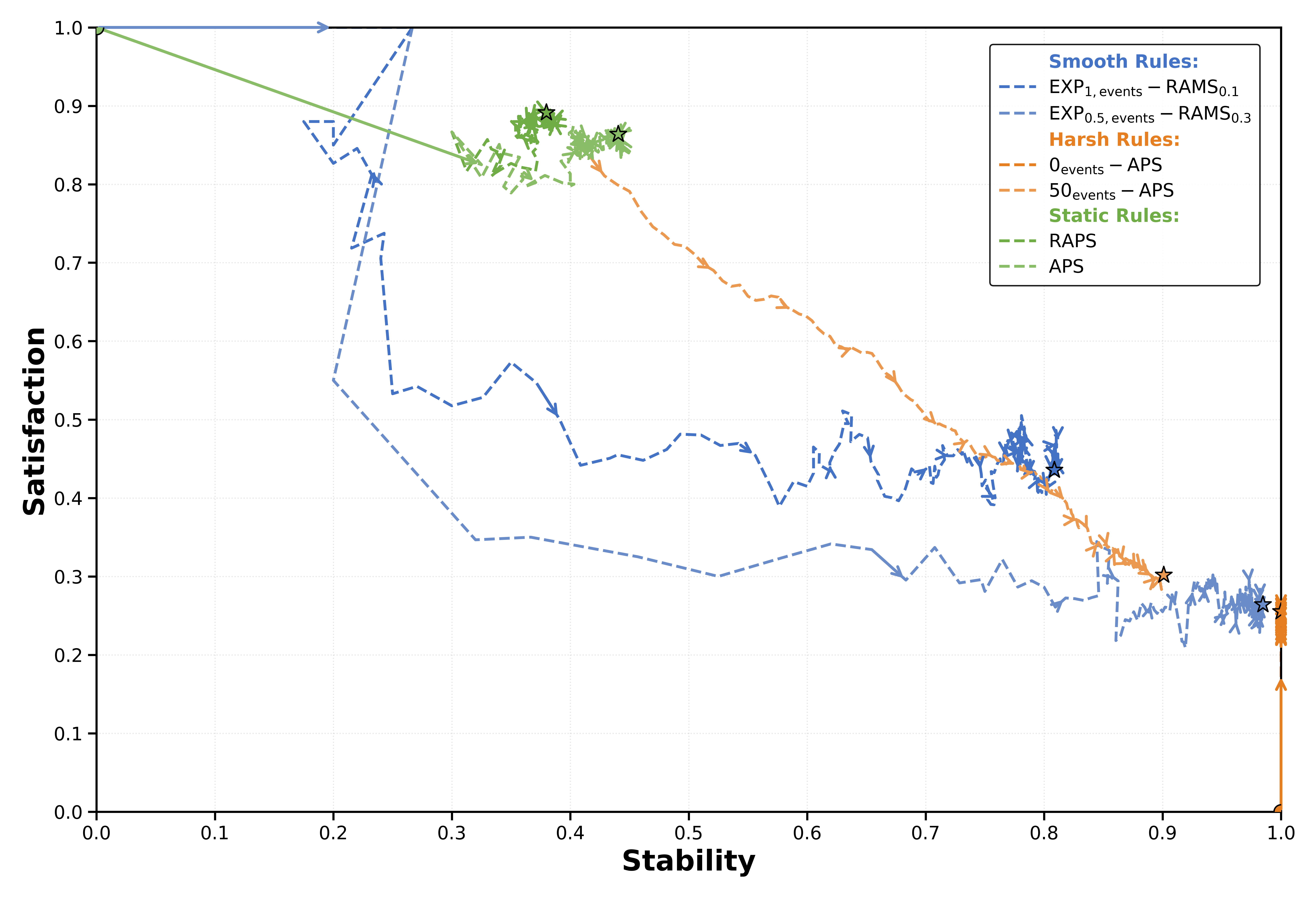}
    \caption{Stability ($x$-axis) and satisfaction ($y$-axis) trade-off evolution in  for different $\mathrm{CCR}$ rules family; instance with 20 unstructured agents for 300 events. For each rule, the plot shows an arrow that advances with the events and illustrates the stability/satisfaction values.}
    \label{fig:unstructured}
    \nicespace
\end{figure}

This agent model elucidates the evolution of the event list within the design trade-off space. As illustrated in Figure~\ref{fig:unstructured}, static rules promote system dynamism, yielding high satisfaction but low stability. Conversely, harsh rules enhance stability at the expense of satisfaction, while smooth rules exhibit an exponential increase in stability accompanied by a decrease in satisfaction.

% A core interest following the impossibility is to analyze the tradeoff between satisfaction and stability. Given wide document applications (fitting our settings) and rule configurations possibilities, analyzing the Pareto curve of dominated rules (I, achieved the highest score in at least one measure) shows nicely, given a certain system configuration, how the trade-off behaves.

\subsection{Euclidean Population}\label{section:euclidean_population}

The population model described above is primitive in that all agent decisions are taken uniformly at random. Here, we consider more involved agents, allowing us to control the level of consensus in the population. 
%
%\nimrod{perhaps it's nice to say something like: the population above is very primitive; here we are considering a bit more involved population -- still rather simple, but at least we are able to play with populations of different styles: one in which all agents are quite in sync, and one in which they are not quite but around some mean (the gaussian); and one in which there is an ideological divide..}
%
The setting is as follows:
\begin{itemize}

\item
\textit{Context}: each paragraph $p \in P$ corresponds to a point $0 \leq p \leq 1$; the intended meaning is that the position on the corresponds to a thematic sentiment score (e.g., $0$ means ``do not agree''; while $1$ means ``completely agree'').

\item
\textit{Population}: each agent $a \in A$ is defined by an interval $[left(a), right(a)]$, $0 \leq left(a) \leq right(b) \leq 1$, representing their sentiment range regarding the . 
% We explore Uniform, Gaussian, and Bimodal Gaussian Distributions to define the ideal point and radius.

\item
\textit{Scheduler}: in each event iteration, the scheduler randomly selects an agent $a$ from the population $A$. The agent action (suggestion/vote) depends on the ratio $r_a$, the fraction of previously proposed paragraphs within the agent’s score interval relative to the total number of suggestions. We set a minimum rate value $r_{min} = 0.2$ so if $r_a < r_{min}$, agent $a$ initiates a \textit{new proposal} $p$ within $[left(a), right(a)]$. Otherwise, the agent votes on a randomly selected paragraph $p$. If $\mathrm{stance}((A, E))_{a,p} \in \{+1,-1\}$, the agent obtains ($0$). Else, the agent up-votes if $p \in [left(a), right(a)]$; otherwise, they down-vote. 
% We schedule the event lists with $r_{min} = 0.2$.

\end{itemize}

We consider three distributions of agents' ideal point (the middle of their interval): Uniform, Gaussian, and Bimodal Gaussian.
All experiments include $20$ agents and $300$ events, with metric values averaged over $5$ repetitions. This setup balances computational feasibility, yet is sufficient to demonstrate the effects we are interested in (see Appendix~\ref{appendix:extended_pareto_analysis} for additional analyses).

These richer experiments allow for a \textbf{Pareto analysis} to rigorously assess trade-offs between satisfaction and stability -- two competing performance criteria central to our framework. Pareto-optimal rules are identified as those configurations for which no other rule provides simultaneously superior satisfaction and superior stability. This more succinct visualization (as we show only the Pareto non-dominated rules), as can be seen in Figure~\ref{fig:gaussian_comparasion}, enables an informed selection of aggregation strategies tailored to specific application needs (e.g., prioritizing stability for constitutional documents versus satisfaction for statements ones).
As the figure shows, when comparing the normal population (homogeneous) with the heterogeneous bimodal population, the latter exhibits a wider Pareto front than the homogeneous population, reaching higher satisfaction levels -- albeit at the cost of reduced stability. 
By contrast, the unimodal population yields a narrower, more tightly clustered Pareto frontier, indicating high stability yet moderate satisfaction.
% Also, the unimodal population yields a narrower frontier, tightly clustered around high stability and moderate satisfaction, 
This reflects consistent agreement among agents with overlapping preferences. 

\begin{figure}[t]
    \centering
    \includegraphics[width=0.45\textwidth, height=0.5\textheight, keepaspectratio]
    {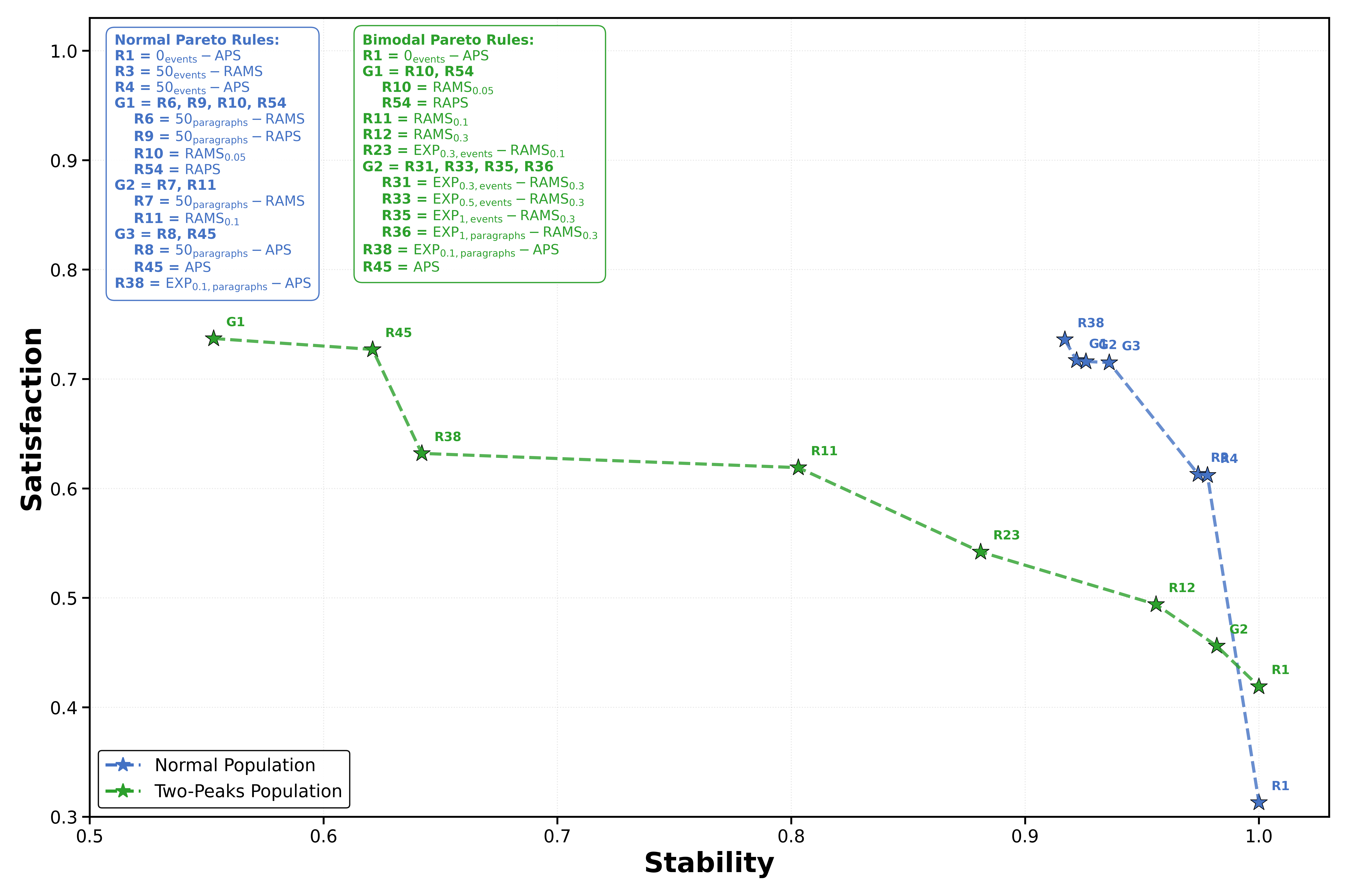}
    \caption{Comparison of the satisfaction-stability Pareto trade-offs achieved under two Euclidean populations: normal (unimodal) and two-peaks (bimodal); instances with $20$ agents and $300$ events.}
    \label{fig:gaussian_comparasion}
    \nicespace
\end{figure}

\subsection{LLM-Based Population}\label{section:textual_population}

To introduce greater behavioral realism, we simulate agents using large language models (LLMs). Utilizing prompt engineering techniques, these agents autonomously reason and respond contextually to textual content and evolving system states (stance and solution), significantly enhancing simulation depth and realism~\cite{userbehaviorsimulationlarge, simagentlargelanguagemodels}.

\paragraph{Population.} Each agent is defined by a demographic identity and sentiment orientation, and interacts with an evolving shared document through text-based actions. \textbf{Agent demographic} profiles, reflecting a real-world population data structure of age, sex, and educational attainment drawn from national statistics for Israel in 2022 (see Appendix~\ref{appendix:experiments}).
Each profile is associated with a sampled topic-specific sentiment category, representing the agent’s level of activism on the policy topic -- from active resistance to proactive support -- and shaping the voting and proposal behavior.
The categories are treated as strategic behavior types, consistent with population-level modeling of sentiment~\cite{advancingSentiment}.

\paragraph{Scheduler.} Each simulation run is initialized with a predefined configuration: a community of agents, an aggregation rule, and a target number of events. Agents interact with the system asynchronously. At each iteration, one agent is selected at random to observe the current system state (tally matrix, self-stance, and current document) and generate an action, either a proposal or a vote. The process proceeds for a fixed number of event iterations or until a convergence criterion is met. Note that due to LLM context-length limitations, we design the model so that an agent’s decision is based solely on the stance, rather than on the full history of events. This implementation assumes a uniform engagement rate for agents (see Section~\ref{section:outlook} for limitations).

\paragraph{Prompting and Action Generation.} 
Agent actions are produced using the GPT-4o-mini model, executed via LangChain's prompt interface (at a fixed temperature of 0.7). Each prompt consists of two components: a \underline{system prompt} encoding the agent's demographic profile and sentiment orientation, and a dynamically generated \underline{decision prompt} to guide the next interaction. The decision prompt uses a stepwise reasoning format adapted from Chain-of-Thought prompting (CoT) for reasoning~\cite{promptEngineering}.
To align agent actions with their sentiment orientation, we use dynamic \emph{few-shot prompting}: at each decision point, a subset of three policy statements is sampled from a pre-generated corpus (filtered by the agent’s sentiment category and domain).
(see Appendix~\ref{appendix:prompts} for templates.)

\begin{example}\label{example:llm_schedule}
Consider an instance with $20$ LLM-based agents and $50$ events, simulated under $\mathrm{CCR}_{[\mathrm{APS}, 0.7]}$ (see additional details and event generation in Appendix~\ref{appendix:simulations_schedulers}). 
The textual use case is a simulation of a collaborative writing process for the development of a city-level constitutional document that addresses climate-related measures. 

The resulting tally matrix is presented in Table~\ref{tab:example_llm_tally}, and the resulting policy document includes only paragraphs $p_1$ and $p_3$ (paragraph $p_2$ was excluded, as $\mathrm{APS}(p_2, E) = \nicefrac{5}{9} < 0.7$).

\end{example}

\begin{table}[t]
    \centering
    \footnotesize
    \caption{Tally matrix for $20$ LLM-based agents (see Example~\ref{example:llm_schedule}).}
    \setlength{\tabcolsep}{3pt} % reduce column padding
    \begin{tabular}{c@{\hspace{1mm}}p{4.6cm}@{\hspace{1mm}}c@{\hspace{1mm}}c}
    % \begin{tabular}{c@{\hspace{4mm}}cccc}
    \toprule
    \multicolumn{1}{c}{\textbf{Paragraph}} & \multicolumn{1}{c}{\textbf{Text}} & \multicolumn{1}{c}{\textbf{+ Votes}} & \multicolumn{1}{c}{\textbf{- Votes}} \\
    \midrule
    \multirow{2}{*}{$p_1$} & Implement community gardens in urban areas to promote local food production and biodiversity. & \multirow{2}{*}{$12$} & \multirow{2}{*}{$5$} \\
    \midrule
    \multirow{2}{*}{$p_2$} & Maintain current agricultural practices without promoting new sustainability initiatives. & \multirow{2}{*}{$5$} & \multirow{2}{*}{$4$} \\
    \midrule
     \multirow{2}{*}{$p_3$} & Create community workshops on sustainable gardening practices and local food systems. & \multirow{2}{*}{$3$} & \multirow{2}{*}{$0$} \\
    \bottomrule
    \end{tabular}
    \label{tab:example_llm_tally}
\end{table}

We run simulations with these configurations:
\begin{itemize}
    \item \textbf{Simulations:} The population consists of 20 agents. Each run includes 150 or 250 scheduled events, with five repetitions per aggregation rule configuration.
   \item \textbf{Datasets} \normalfont{(see Appendix~\ref{appendix:datasets} for details)}: Agent profiles are sampled from Israeli national census data (2022), capturing distributions over sex, age, and educational attainment. To support sentiment conditioning, we generate a synthetic corpus of climate policy proposals using a language model (GPT-4o), grouped into five sentiment categories. These serve as few-shot examples to align agent behavior with their orientation.
\end{itemize}

\begin{figure}[t]
    \centering
    \includegraphics[width=0.45\textwidth, height=0.5\textheight, keepaspectratio]{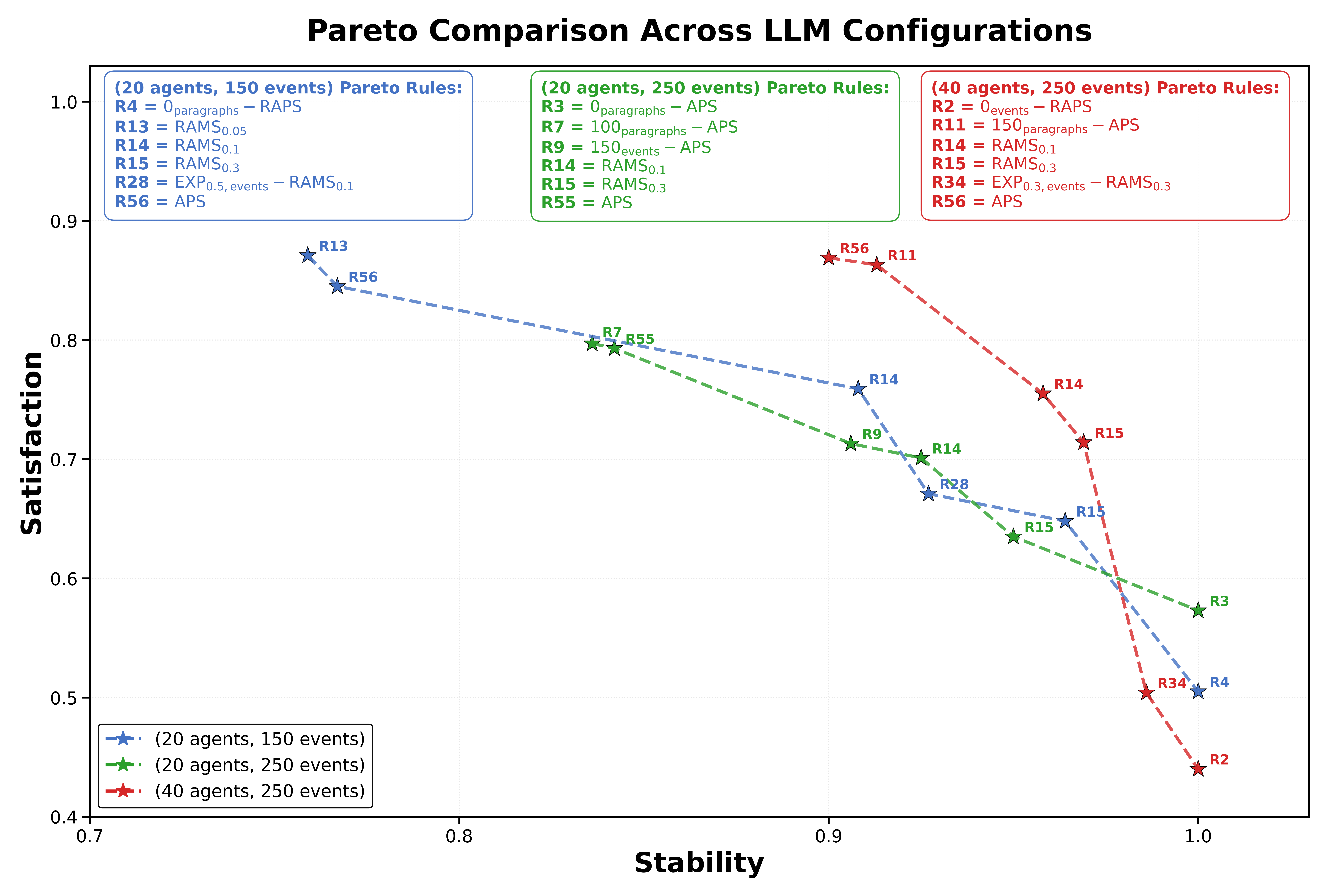}
    \caption{Comparison of the satisfaction–stability Pareto trade-offs achieved under three LLM-based simulation configurations: (20 agents, 150 events), (20 agents, 250 events), and (40 agents, 250 events).}
    \label{fig:configuration_comparasion}
    \nicespace
\end{figure}

Our Pareto analysis (Figure~\ref{fig:configuration_comparasion}) highlights crucial dependencies between simulation parameters (event length and population size) and the resulting stability–satisfaction trade-offs. Shorter event lists inherently favor stability, while extended simulations offer broader satisfaction-oriented configurations. Similarly, increasing community size enhances satisfaction due to richer agent diversity but challenges stability. APS and RAMS exhibit diversity of Pareto rules across scenarios, providing versatile tools for various contexts.

\section{Outlook}\label{section:outlook}

We described a model for collaborative text aggregation and showed how different aggregation rules shape system dynamics, highlighting the stability–satisfaction tradeoff.
\textbf{Methodologically}, our simulation framework has proven to be effective (e.g., showing that uniformly distributed populations result in notably larger solutions; see Appendix~\ref{appendix:additional_results}). 
\textbf{Practically}, our work enables a community to choose an aggregation method based on the point in the satisfaction/stability Pareto curve they prefer. 
Directions for future research include:
\begin{itemize}
    \item Exploring methods (e.g., recommendation systems) to deal with agents voting only on a few paragraphs -- partial preferences; 
    \item Considering \textbf{open community} settings in which agents gradually enter and leave the system (e.g., with preferential attachment);
    \item Designing \textbf{further aggregation rules}, including those that consider multiple paragraphs simultaneously, to mitigate semantic inconsistencies; 
    % and processes 
    additionally, exploring processes that incorporate token-based incentives and interaction gamification;
    \item Performing more \textbf{simulations} as well as \textbf{user studies};
    \item Performing theoretical analysis on smooth rules;
    \item Adapting our framework \textbf{other use-cases} of collaborative document writing (e.g., ordered sets).
\end{itemize}

\newpage
\clearpage

%% Use this environment to include acknowledgements(optional).

\begin{ack}
Co-funded by the European Union under the Horizon Europe project PERYCLES (Grant Agreement No. 101094862). The authors thank Aharon Porath for numerous discussions on the Consenz platform.
\end{ack}

\bibliography{bib-ecai25}

\newpage
\clearpage
\appendix

% We provide missing proofs, more examples and figures,m details regarding the smoothing function selection, and experimental details.

\section*{Appendix Overview:}

\begin{enumerate}[label=\Alph* \textendash, leftmargin=2em, labelsep=0.5em]
    \item \nameref{appendix:missing_proofs}
    \item \nameref{appendix:CRR_pseudo}
    \item \nameref{appendix:versatile_stable}
    \item \nameref{appendix:appendix_Smoothing_functions}
    \item \nameref{appendix:experiments}
    \item \nameref{appendix:additional_results}
\end{enumerate}

\section{Missing Proofs}\label{appendix:missing_proofs}

We provide the proofs missing from the main text.
\appendixProofText
\section{CCR Pseudocode}\label{appendix:CRR_pseudo}
Algorithm~\ref{alg:CCR_algorithm} describes the operation of CCR rules.

\begin{algorithm}[ht]
\caption{Consensus-Conditioned Rule ($\mathrm{CCR}_{[\mathrm{CSF}, x]}$)}
\label{alg:CCR_algorithm}

\raggedright
\textbf{Input}: $\mathcal{T}=(A, E), \mathrm{CCR}=\bigl(\mathrm{CSF}, x\bigr)$

\begin{algorithmic}[1] %[1] enables  numbers
\STATE $S \gets \emptyset$
\STATE $score \gets 0$

\FOR{$t = 1$ to $|E|$}
    \STATE Let $E[t] = \langle e_1, ...,e_t \rangle$
    \FOR{each paragraph $p$ in $P(E[t])$}
        \STATE Compute: $score \gets \text{CSF}(p, E[t])$
        \IF{$score \ge x  \textbf{ and}  p \notin S$}
            \STATE \textbf{Attach:} $S \gets S \cup \{p\}$
        \ELSIF{$score < x  \textbf{ and }  p \in S$}
            \STATE \textbf{Detach:} $S \gets S \setminus \{p\}$
        \ENDIF
    \ENDFOR
\ENDFOR
\end{algorithmic}
\textbf{Output}: $S$ (final set of paragraphs selected) 
\end{algorithm}

% Figure 6 - Exponential smoothing function

\begin{figure*}[ht]
  \centering
  \begin{subfigure}[t]{0.3\textwidth}
    \includegraphics[width=\linewidth]{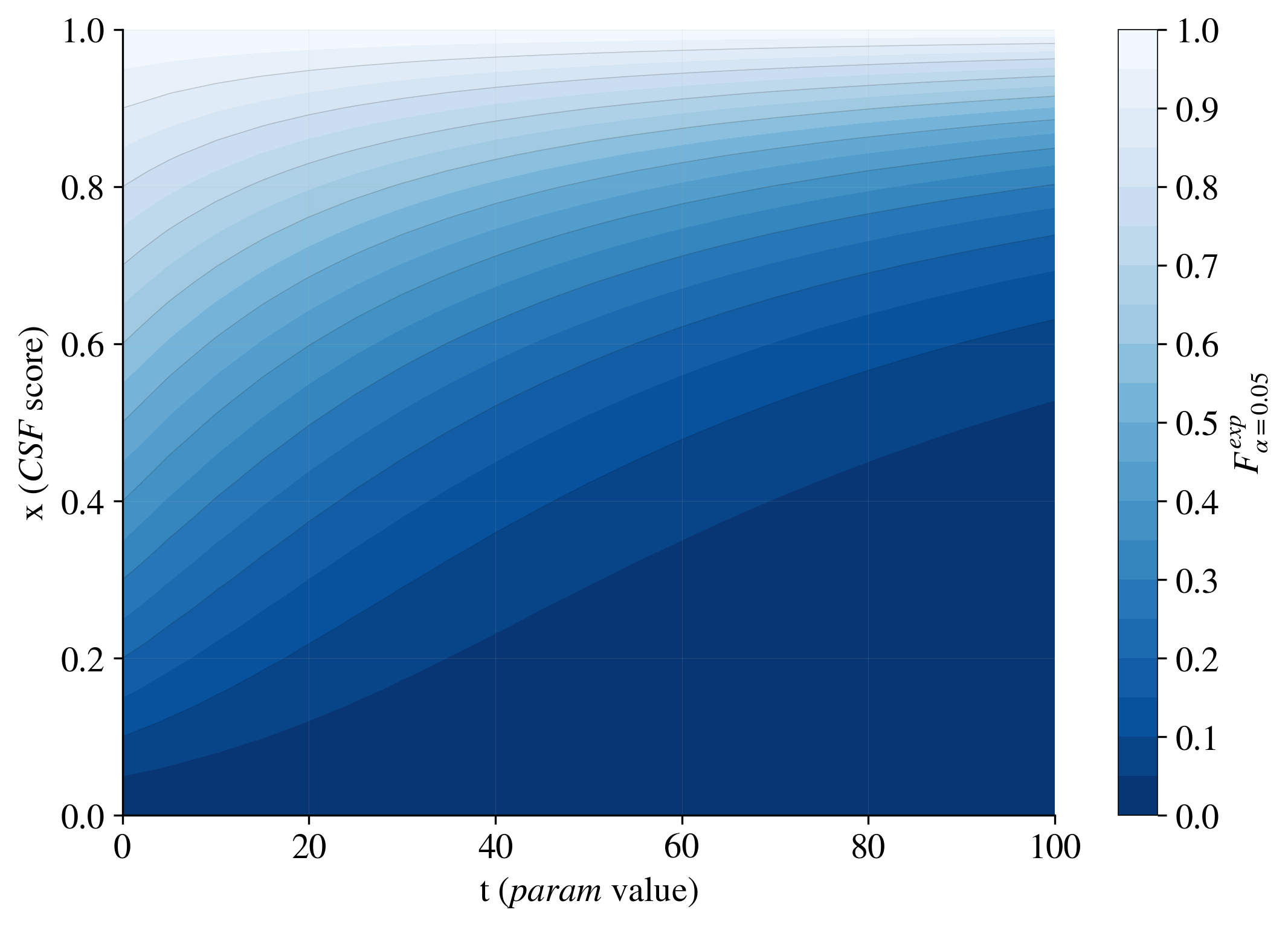}
    \caption{$\alpha=0.05$}
    \label{fig:0.05}
    \nicespace
  \end{subfigure}
  \hfill
  \begin{subfigure}[t]{0.3\textwidth}
    \includegraphics[width=\linewidth]{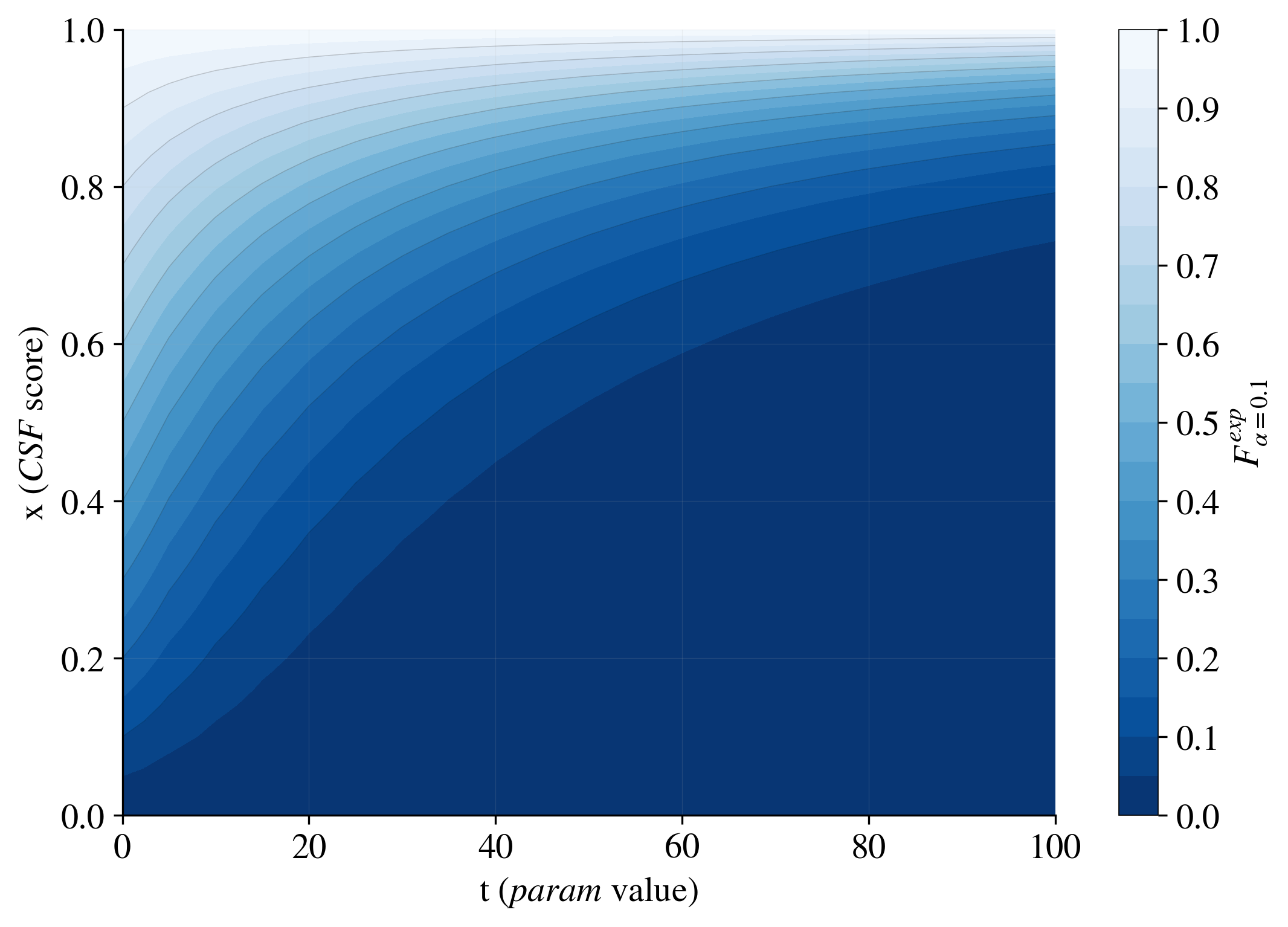}
    \caption{$\alpha=0.1$}
    \label{fig:0.1}
    \nicespace
  \end{subfigure}
  \hfill
  \begin{subfigure}[t]{0.3\textwidth}
    \includegraphics[width=\linewidth]{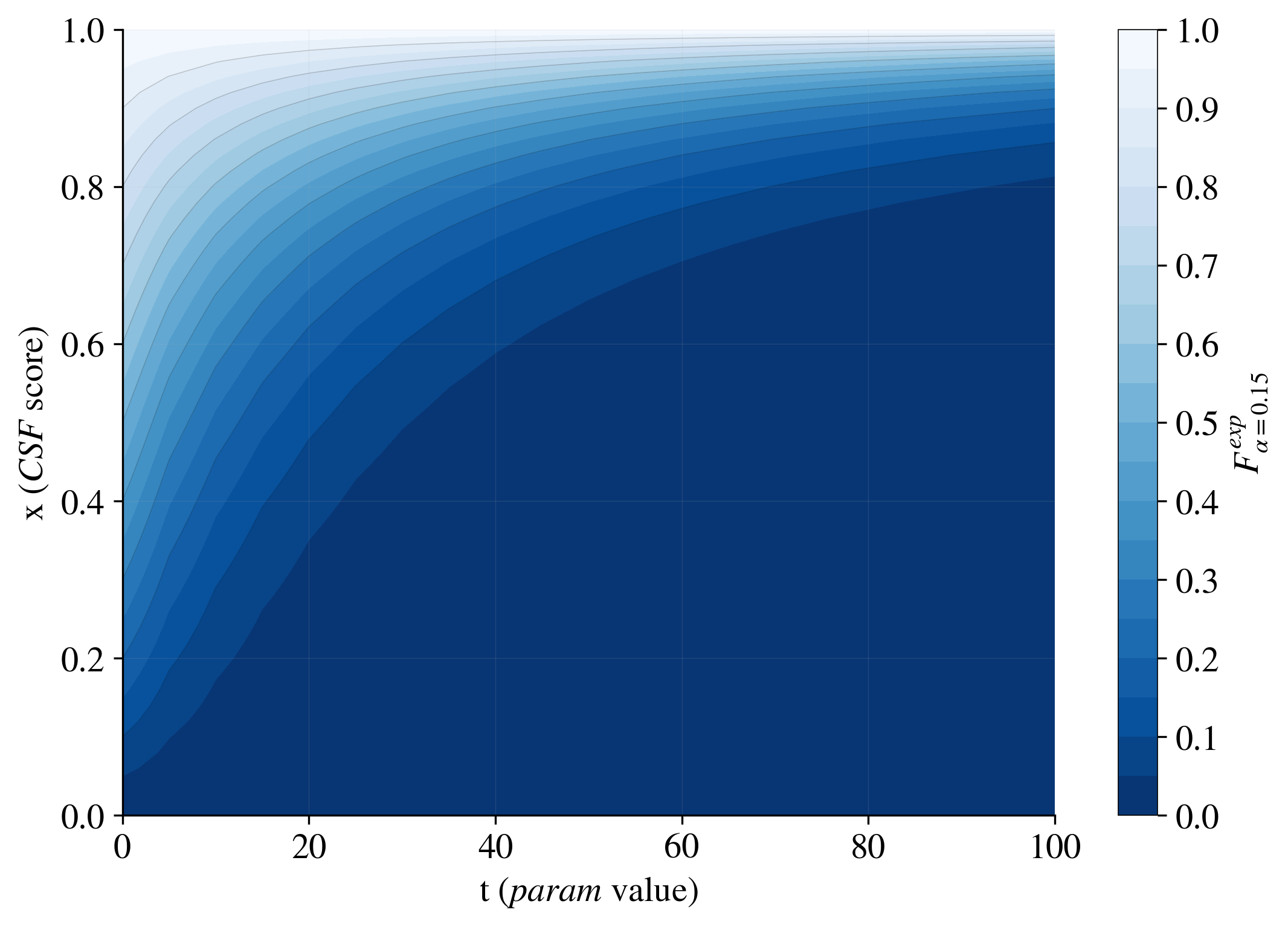}
    \caption{$\alpha=0.15$}
    \label{fig:0.15}
    \nicespace
  \end{subfigure}
    \caption{Exponential smoothing function ($F^{exp}_{\alpha}$) with $\alpha=0.05$ (left), $\alpha=0.1$ (middle), and $\alpha=0.15$ (right). 
    Larger values of $\alpha$ produce sharper transitions, particularly for higher $t$. This reflects the \textbf{exponential decay behavior} where higher $\alpha$ yields steeper drops for a given value of $t$. }
    %Increasing the value of $\alpha$ leads to a sharper transition in the function's values, particularly for larger values of $t$. This reflects the \textbf{exponential decay behavior}, where higher values of $\alpha$ lead to sharper drops in the function for a given value of the dynamic property $param$.}
    \label{fig:exp_alpha}
    \nicespace
\end{figure*}

\section{A Versatile, Stable, Non-static CCR}\label{appendix:versatile_stable}

We provide an example of a non-static CCR that is both versatile and stable.

\begin{example}
%
% Let $CSF$ be a (non-static) CSF  with $CSF(p, E)$ being computed as follows.
Consider a non-static consensus scoring function $\mathrm{CSF}$, computed as follows: it traverses the event list $E$ in order, but first looks for an abstention vote on some predefined paragraph $p^*$ -- i.e., if there is an event of the form $e = (a, p^*, 0)$ in $E$.
%-- 
Let $E^*$ denote the prefix of $E$ up to this special event (not including). Then, for each paragraph $p$, it assigns a score $1$ if $E^*$ contains \textbf{any} vote on $p$; otherwise, it assigns a score of $0$.
Indeed, the $\mathrm{CCR}_{[\mathrm{CSF}, 0.5]}$ rule is both versatile and stable:
\begin{itemize}

\item
\textbf{Versatility}:
For some set $P' = \{p_1,...,p_n\} \subseteq P$, construct an event list $E' = \langle (a,p_1,+1), ..., (a,p_n,+1), (a,p^*,0) \rangle$ and note that $\mathrm{CCR}_{[\mathrm{CSF}, 0.5]}(E') = P'$.

\item
\textbf{Stability}:
%
% For any event list $E' = \langle (a, p^*, 0)$ to any event list $E$ results in $CCR_{[CSF, 0.5]}(E + E') = CCR_{[CSF, 0.5]}(E + E' + E'')$ for any event list $E''$.
Consider any event list $E$, and let $E' = \langle (a, p^*, 0)\rangle$. Once this abstention event is appended to $E$, the solution remains unchanged under any further events. Formally, for any additional sequence of events $E''$,  $\mathrm{CCR}_{[\mathrm{CSF}, 0.5]}(E + E') = \mathrm{CCR}_{[\mathrm{CSF}, 0.5]}(E + E' + E'')$.

\end{itemize}
\end{example}

\section{Smoothing Functions}\label{appendix:appendix_Smoothing_functions}

\subsection{Smoothing Properties}\label{appendix:appendix_Smoothing_functions_properties}

To formally demonstrate that the examined smoothing functions fulfill the required properties (Section~\ref{section:Smooth_def}), we proceed as follows:

\begin{itemize}
    \item \textbf{Monotonic decrease with $t$}: 
    As $t$ increases, the $F_{\alpha}$ scores decrease monotonically. Formally, for all $\alpha \in [0,1]$, 
    $x\in [0,1]$ and any $t_1, t_2 \in \mathbb{N}$: if $t_2 \geq t_1$, then $F_{\alpha}(x, t_2) \leq F_{\alpha}(x, t_1)$.
    % \item \textbf{Monotonically Decreases with $t$}: 
    % As $t$ increases, the $F_{\alpha}$ - scores decrease monotonically. Formally, $\forall \alpha \in [0,1]$, $\forall t_1, t_2 \in \mathbb{N}$ and $\forall x\in [0,1]$: if $t_2 \geq t_1$ then $F_{\alpha}(x, t_2) \leq F_{\alpha}(x, t_1)$. 
    %
    \item \textbf{Monotonic decrease with $\alpha$}: The smoothing parameter $\alpha$ governs the rate of
    decay; a larger $\alpha$ results in a more hardened decrease of the function gradient. Formally, for all $x\in [0,1]$, $t \in \mathbb{N}$ and any $\alpha_1, \alpha_2 \in [0,1]$: if $\alpha_2 \geq \alpha_1$, then $F_{\alpha_2}(x, t) \leq F_{\alpha_1}(x, t)$.
    % \item \textbf{Monotonic decreases with $\alpha$}: The smoothing parameter $\alpha$ governs the rate of
    % decay; a larger $\alpha$ results in a more hardened decrease of the function gradient. Formally, $\forall \alpha_1, \alpha_2 \in [0,1]$, $\forall x\in [0,1]$, $\forall t \in \mathbb{N}$: if $\alpha_2 \geq \alpha_1$ then $F_{\alpha_2}(x, t) \leq F_{\alpha_1}(x, t)$.
    %
    \item \textbf{Monotonic increase with $x$}: 
    For a fixed $t$ and $\alpha$, the value of $F_\alpha(x, t)$ exhibits a positive correlation with the base score value $x$. Formally, given any $t \in \mathbb{N}$, $\alpha \in [0,1]$ and for all $x_1, x_2 \in [0,1]$: if $x_2 \geq x_1$, then $F_{\alpha}(x_2, t) \geq F_{\alpha}(x_1, t)$.
    %  \item \textbf{Monotonic increase with $x$}: 
    % The function F exhibits a positive correlation with the score value $x$. Formally, $\forall t \in \mathbb{N}$, $\forall x_1, x_2 \in [0,1]$, $\forall \alpha \in [0,1]$: if $x_2 \geq x_1$ then $F_{\alpha}(x_2, t) \geq F_{\alpha}(x_1, t)$. 
    %
    \item \textbf{Amendability preservation}: $F_{\alpha}(x,t)$ never outputs zero for a nonzero input score $x$, preserving the possibility of future amendments. Formally, if $x \neq 0$, then $F_{\alpha}(x, t) \neq 0$ for all $t \in \mathbb{N}$ and $\alpha \in [0,1]$.
\end{itemize}

\subsection{Exponential Smoothing Family}\label{appendix:appendix_Smoothing_functions_behavior}

For the examined exponential family of smoothing functions, we prove that they indeed satisfy these properties:
\[
\text{\textbf{Exponential}: } F^{exp}_{\alpha}(x, t) = x \cdot e^{-t \cdot \alpha \cdot (1 - x)}
\]
\begin{itemize}
    \item Monotonic decrease with $t$: The derivative of $F^{exp}_{\alpha}$ with respect to $t$ is
    $$
    \frac{\partial F^{exp}_{\alpha}}{\partial t} = -x \cdot \alpha \cdot (1 - x) \cdot e^{-t \cdot \alpha \cdot (1 - x)},
    $$
    which is a negative, since for any $x \in [0, 1]$ and $\alpha \in [0, 1]$ -- $e^{-t \cdot \alpha \cdot (1-x)}>0$, $(1 - x) \ge 0$ but $-x \cdot \alpha \le 0$.
    %$x \in [0, 1]$, $\alpha \in [0, 1]$, and $e^{-t \cdot \alpha \cdot (1 - x)} > 0$, we have $\frac{\partial F^{exp}_{\alpha}}{\partial t} \leq 0$. 
    Hence, $F^{exp}_{\alpha}$ decreases monotonically with $t$.
    \item Monotonic decrease with $\alpha$: The derivative of $F^{exp}_{\alpha}$ with respect to $\alpha$ is
    $$
    \frac{\partial F^{exp}_{\alpha}}{\partial \alpha} = -x \cdot t \cdot (1 - x) \cdot e^{-t \cdot \alpha \cdot (1 - x)},
    $$
    which is a negative, as for any $x \in [0, 1]$ and $t \geq 0$, the term $\frac{\partial F^{exp}_{\alpha}}{\partial \alpha} \leq 0$ , ensuring that $F^{exp}_{\alpha}$ decreases monotonically with $\alpha$.
    \item Monotonic increase with $x$: The partial derivative of $F^{exp}_{\alpha}$ with respect to $x$ is
    $$
    \frac{\partial F^{exp}_{\alpha}}{\partial x} = e^{-t \cdot \alpha \cdot (1 - x)} \left[ 1 + t \cdot \alpha \cdot x \right],
    $$
    which is nonnegative, since  $e^{-t \cdot \alpha \cdot (1 - x)} > 0$  and $1 + t \cdot \alpha \cdot x \geq 0$. Thus, $F^{exp}_{\alpha}(x,t)$ increases monotonically with $x$.
    %As $e^{-t \cdot \alpha \cdot (1 - x)} > 0$  and $1 + t \cdot \alpha \cdot x \geq 0$ , we have \( \frac{\partial F^{exp}_{\alpha}}{\partial x} \geq 0 \), meaning that $F^{exp}_{\alpha}$ increases monotonically with $x$.
    %
    \item Amendability Preservation: For $x \neq 0$, we observe that as $t \to \infty$, $F^{exp}_{\alpha}(x, t)$  approaches zero asymptotically but never reaching. Therefore,  $F^{exp}_{\alpha}(x, t) \neq 0$  for any finite $t$, preserving the possibility for future amendments.
    % Moreover, when $x = 1$, then $F^{exp}_{\alpha} = 1$, which ensures that the highest-scoring paragraphs retain their influence. 
\end{itemize}

Note that $F^{exp}_{\alpha}$ retains the anticipated boundary behavior: scores near 1 remain influential, while scores near 0 remain negligible, ensuring that strong and weak paragraphs preserve their standing.

Figure~\ref{fig:exp_alpha} illustrates the behavior of the exponential smoothing function $F^{exp}_{\alpha}(x, t)$, showing how different values of $\alpha$ affect the functional response across $x$ (the static-CSF in our use) and $t$ (the $param$ value).

\section{Experiments: Additional Details}\label{appendix:experiments}

\begin{figure*}[t]
    \centering
    \begin{subfigure}[t]{0.35\textwidth}
        \centering
        \includegraphics[width=\textwidth, height=0.3\textheight, keepaspectratio]{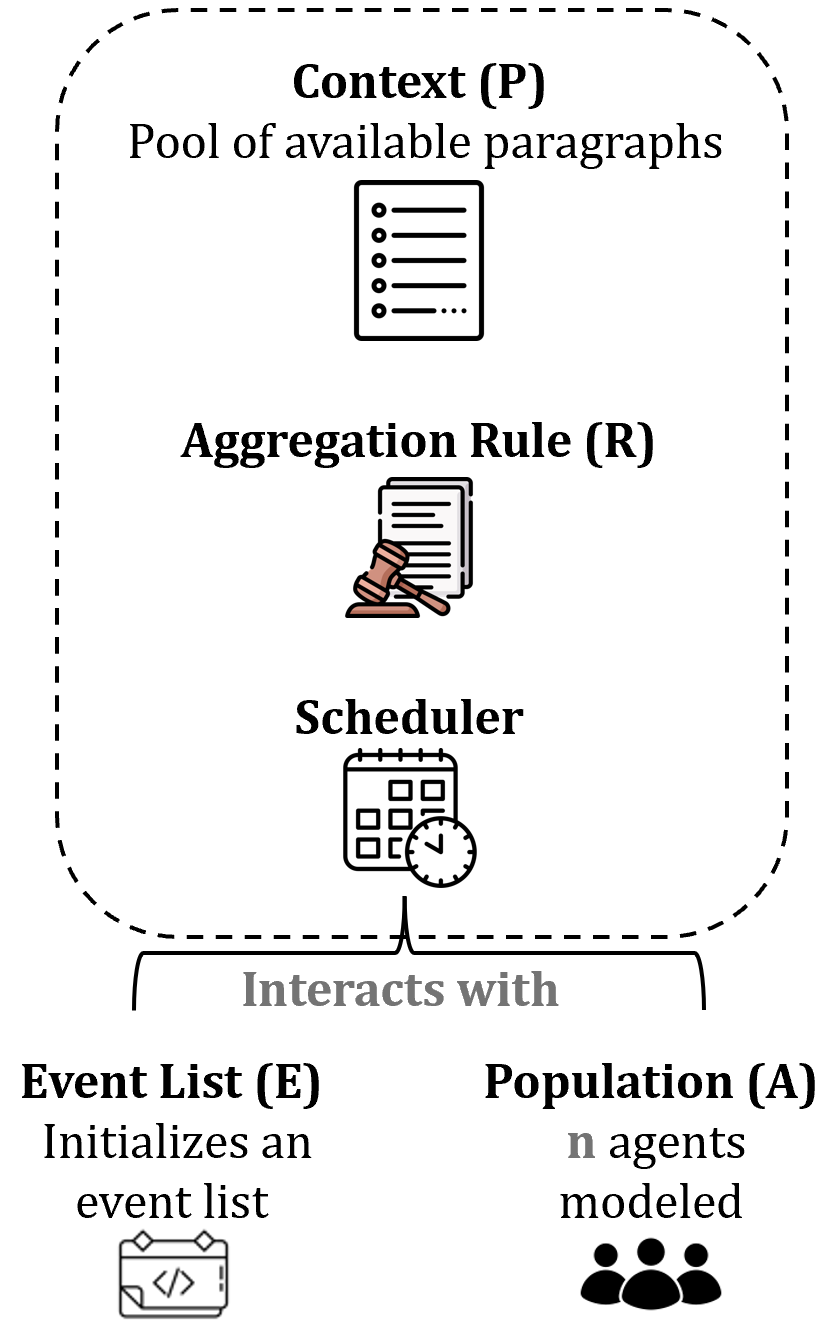}
        \caption{\centering Simulation components defining a system instance.}
        \nicespace
\label{fig:plaformA}
    \end{subfigure}
    \hspace{0.06\textwidth}
    \begin{subfigure}[t]{0.55\textwidth}
        \centering
        \includegraphics[width=\textwidth, height=0.3\textheight, keepaspectratio]{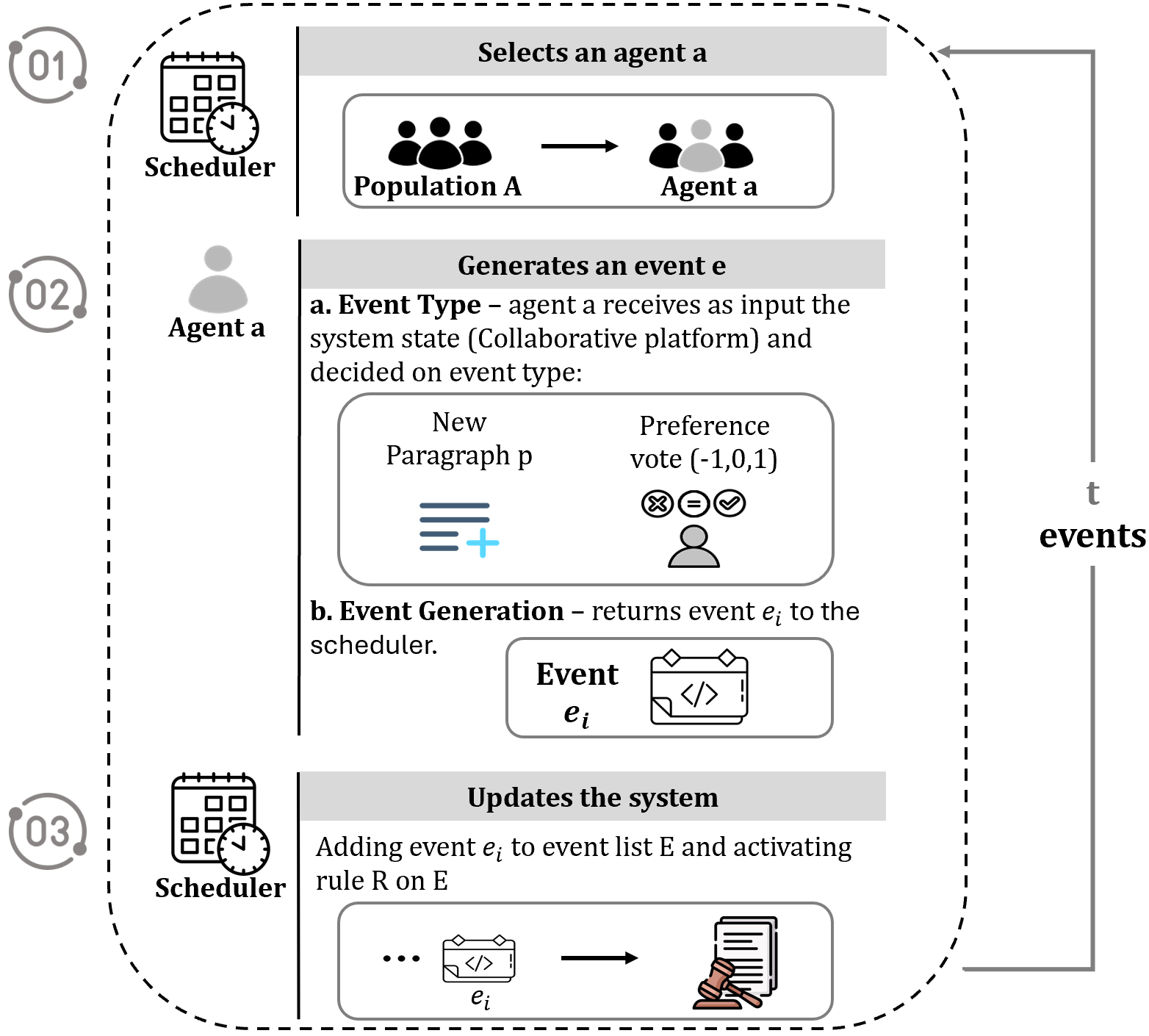}
        \caption{\centering Illustration of the simulation design described in Section~\ref{section:experimental_design}. The scheduler interacts with the agent population $A$ of $n$ agents in thematic context $P$, populating an event list $E$ with $t$ events.}
\nicespace
          \label{fig:simulation_design}
    \end{subfigure}
    \caption{Overview of the simulation setup.}
    \label{fig:simulation_overview}
\nicespace
\end{figure*}

\subsection{Simulation Components Illustrations}\label{appendix:Simulations Design}

Figure~\ref{fig:plaformA} illustrates the abstract simulation components, and Figure~\ref{fig:simulation_design} illustrates the simulation design.

\subsection{Datasets}\label{appendix:datasets}

The multidimensional structure enables proportional sampling of agent profiles across realistic demographic categories. These distributions are derived from national census estimates. 

\begin{description}
    \item[Demographic UN dataset:] 
    We use the United Nations dataset “Population 15 years of age and over, by educational attainment, age and sex” (UNdata)~\cite{UNdata2022}, filtered to Israel (2022) as the most recent data. This dataset provides absolute population counts (in thousands) for each combination of sex, age group, and education levels (ISCED classification). Age groups are reported in 5-year intervals, and sex is binary (male and female). Population counts were normalized to proportions, forming a weighted sampling distribution over joint demographic profiles (may not sum exactly due to rounding).
    \item [Synthetic Climate Action Proposal Dataset:] This dataset spans ten key domains relevant to our use-case topic -- urban climate governance (Building, Transport, Energy, Waste, Water, Health, Business, Natural Environment, Land Use, and Natural Hazards). Within each domain, we generated a series of proposals across a continuous activism sentiment scale from $0.0$ to $1.0$, divided into five sentiment categories:
    \begin{itemize}[label=-]
        \item \textbf{Active Resistance (0.00--0.20):} Proposals that oppose climate action or dismantle existing measures.
        \item \textbf{Minimal Acknowledgment (0.20--0.40):} Proposals that recognize climate concerns without enforcing meaningful change.
        \item \textbf{Balanced Approach (0.40--0.60):} Proposals that weigh climate action alongside economic or practical constraints.
        \item \textbf{Supportive Measures (0.60--0.80):} Proposals that endorse proactive but feasible climate policies.
        \item \textbf{Proactive Action (0.80--1.00):} Proposals that prioritize ambitious climate-first transformations.
    \end{itemize}
    Thus, each proposal entry includes its domain, sentiment score, textual description, and a rationale explaining its intent or impact. We generated 21 proposals per domain (one at each 0.05 sentiment increment), for a total of just over 200 proposals, and combined them into a single JSON dataset.
    The proposals are grouped by sentiment category to support dynamic \emph{few-shot prompting}. At runtime, for each API call, three different subtopic proposals matching the agent's sentiment category are sampled to guide contextually relevant decision-making. 
    
    \vspace{0.5\baselineskip}

    \textbf{Examples: Active Resistance}
    \begin{itemize}
        \item \textbf{[Building]} Mandate fixed temperature settings in all buildings year-round without adjustment for occupancy or seasons. \\
        \textit{Reasoning:} Enforces energy-wasteful practices.
        \item \textbf{[Energy]} Require all municipal energy contracts to prioritize the lowest cost regardless of source, with explicit exclusion of renewable premium payments. \\
        \textit{Reasoning:} Systematically disadvantages clean energy options.
    \end{itemize}

    \textbf{Examples: Balanced Approach}
    \begin{itemize}
        \item \textbf{[Building]} Consider both energy efficiency and traditional cost metrics equally in building decisions. \\
        \textit{Reasoning:} Balanced consideration without preference.
        \item \textbf{[Energy]} Create a public dashboard showing real-time municipal renewable energy generation. \\
        \textit{Reasoning:} Transparency promoting voluntary action.
    \end{itemize}

\end{description}

\subsection{Prompts}\label{appendix:prompts}

For completeness, we detail the prompts used for LLM-based agent simulations. These prompts are rendered using LangChain's ChatPromptTemplate to maintain multi-turn interactions:

\begin{tcolorbox}[title=Chain Template (ChatPromptTemplate), width=\linewidth, breakable]
\rmfamily\small
\textbf{chat\_prompt} = ChatPromptTemplate.from\_messages([\\
{SystemMessage(content=\textcolor{Cyan}{\bb{\textbf{\textit{system\_prompt}}}}),}\\
{HumanMessage(content=\textcolor{Cyan}{\bb{\textbf{\textit{decision\_prompt}}}})}\\
{])}
\end{tcolorbox}

\subsubsection*{System Prompt Design}\label{section:system_prompt}

The system prompt introduces each agent, specifying its role in a collaborative policy-writing task. 
The template includes placeholder variables such as \emph{agent id}, \emph{topic}, \emph{profile}, \emph{topic position}, and its \emph{description}. These slots are populated at runtime to customize the prompt for each agent. The system prompt thereby positions the agent within a specific background context and defines their activism position on the topic. Our system prompt follows the following template:

\begin{tcolorbox}[title=System Prompt Template, width=\linewidth]
\small
\raggedright
%\rmfamily
%
You are Agent \textcolor{cyan}{\bb{\textbf{\textit{agent\_id}}}}, a participant in a collaborative constitution writing system focused on \textcolor{cyan}{\bb{\textbf{\textit{topic}}}}. \\ [2pt]
You engage dynamically with an evolving draft document, consisting of community proposals. The document is a list of action proposals that your community is considering adopting as policy. \\[2pt]
\textbf{YOUR PROFILE}: \textcolor{cyan}{\bb{\textbf{\textit{profile}}}}. \\
\textbf{YOUR POSITION ON} \textcolor{cyan}{\bb{\textbf{\textit{topic}}}}: \textcolor{cyan}{\bb{\textbf{\textit{topic\_position}}}}(\textcolor{cyan}{\bb{\textbf{\textit{position\_category}}}}). \\[2pt]
Accordingly, you have the following orientation: \textcolor{cyan}{\bb{\textbf{\textit{description}}}}. \\
As a participant, you can decide to vote on existing proposals (upvote, downvote, abstain) or suggest a new proposal. \\
Act consistently with your profile. Think step-by-step; align your reasoning explicitly with your stated orientation.
\end{tcolorbox}

% \begin{quote}[\textbf{System Prompt}]
% \rmfamily\small

% You are Agent \bb{\textbf{\textit{agent\_id}}}, a participant in a collaborative constitution writing system focused on \bb{\textbf{\textit{topic}}}. \\
% You engage dynamically with an evolving draft document, consisting of community proposals. \\
% The document is a list of action proposals that your community is considering
% adopting as policy. \\
% YOUR PROFILE: \bb{\textbf{\textit{profile}}} \\
% YOUR POSITION ON \bb{\textbf{\textit{topic}}}: \bb{\textbf{\textit{topic position}}} (\bb{\textbf{\textit{position category}}}). \\
% Accordingly, you have the following orientation: \bb{\textbf{\textit{description}}} \\
% As a participant, you can decide on voting on existing proposals (upvote, downvote, abstain) or suggest a new proposal. \\
% Act consistently with your profile. Think step-by-step; align your reasoning explicitly with your stated orientation.
% \end{quote}

\subsubsection*{Decision Prompt Design}\label{section:decision_prompt}
The decision prompt presents the document's current state from the agent's point of view, the stance of the system, its current preferences, and the current document composition. It guides an agent using a structured \emph{Chain-of-Thought (CoT)} approach, facilitating stepwise reasoning for two primary sequential decisions:
\begin{enumerate}
    \item Whether to vote or propose.
    \item If voting, selecting the target paragraph and preference; or, if proposing, generating a new diverse proposal.
\end{enumerate} 
Variable placeholders include \emph{current state}, \emph{action hint}, and dynamically integrated  \emph{examples}. Our decision prompt follows the following template:
\begin{tcolorbox}[breakable, title=Decision Prompt Template, width=\linewidth]
\small
\raggedright
%\rmfamily
%
\textbf{SYSTEM STATE:} \textcolor{cyan}{\bb{\textbf{\textit{system\_state\_section}}}}.\\ [2pt]
\textbf{TASK:} Based on the system state and your profile, you must decide your next action through the following reasoning steps. \\
\textcolor{cyan}{\bb{\textbf{\textit{action hint}}}}. Be very concise when addressing each step. \\ [2pt]
\textbf{Step 1: Situation Analysis} \\ [2pt]
- Analyze current proposals, subtopics, their vote counts, and your own stance. \\
- Note which paragraphs are included in the document and which are not. \\
- Identify over-represented (repeated ideas), under-represented (missing themes), or unbalanced (extreme/vague) topics. \\ [2pt]
\textbf{Step 2: Position Alignment} \\ [2pt]
- Evaluate if the included paragraphs reflect your profile and position. \\
- Identify proposals/content you support or oppose, and whether they align with your orientation. \\
- Identify critical gaps that would make the document incomplete or unbalanced from your perspective. \\ [2pt]
\textbf{Step 3: Decision Evaluation} \\ [2pt]
- Decide whether proposing or voting best influences the document at this point. \\
- If your support for a proposal has weakened (own vote not "?"), Choose ABSTAIN to reflect neutrality. \\
- Determine the optimal strategic action: propose a new paragraph/vote on an existing one. \\ [2pt]
\textbf{Step 4: Action Formulation} \\ [2pt]
\textit{If PROPOSE:} Formulate a new proposal that: \\
- Addresses a sub-topic important to your background \\
- Is practical, realistic, diverse (max 20 words) \\
- Aligns with these examples of your position: \textcolor{cyan}{\bb{\textbf{\textit{self.\_format\_examples\_for\_prompt()}}}} \\
\textit{If VOTE:} Choose a paragraph and indicate clearly if you UPVOTE (agree), DOWNVOTE (disagree), or ABSTAIN (change previous vote to undecided). \\ [4pt]
\textbf{Voting Rules (Mandatory):} \\ [2pt]
- You may only ABSTAIN on a paragraph where your current vote is +1 or -1. \\
- If your current vote is '?', choose only UPVOTE or DOWNVOTE. \\
- Do NOT repeat your current vote.  \\ [4pt]
\textbf{Respond with the following structure (Mandatory):}  \\ [2pt]
\textbf{DECISION:} PROPOSE / VOTE \\
\textbf{PARAGRAPH ID:} [If VOTE: Paragraph ID, If PROPOSE: Next available ID] \\
\textbf{ACTION DETAILS:} [If VOTE: Proposal text, If PROPOSE: New proposal text] \\
\textbf{VOTE:} [If VOTE: UPVOTE / DOWNVOTE / ABSTAIN, If PROPOSE: UPVOTE only] \\
\textbf{REASONING:} [Brief justification, max 20 words, aligned with profile, position, and state].
\end{tcolorbox}

\subsection{Scheduling Processes Examples}\label{appendix:simulations_schedulers}
\begin{example}[Unstructered Population Scheduler]\label{ex:unstractured_scheduler} 
Here is a possible simulation run with an unstructured population (see Section~\ref{section:unstructured_population}). We define the system with context $P = \{p_1,p_2,p_3\}$ and agent population $A = \{a_1,a_2,a_3\}$ and number of events is three ($|E| = 3$). In the first iteration, the scheduler randomly selects agent $a_2$, who proposes a new paragraph $p_1$ since no paragraphs exist yet, leading to the event $e_1 = (a_2, p_1, +1)$. In the second iteration, agent $a_1$ performs a preference voting event on $p_1$ and votes in favor, adding event $e_2 = (a_1, p_1, +1)$. During the third iteration, $a_3$ is selected; however, since they have no prior votes, they are unable to perform the drawn event type of preferred vote. Hence, they propose $p_2$, resulting in $e_3 = (a_3, p_2, +1)$. Once the list reaches the required length, the process stops, and the resulting event list is: $E = \langle (a_2, p_1, +1), (a_1, p_1, +1), (a_3, p_2, +1) \rangle$.
\end{example}
\noindent\textbf{Detailed Instance and Event Generation}
\\ [2pt]
\noindent We provide supplementary details for the simulation instance presented in Example~\ref{example:llm_schedule} (a full log is available in the project repository). The instance involves 20 LLM-based agents distributed across five sentiment categories toward climate change policy: six proactive actions, five supportive measures, four minimal acknowledgments, four active resistances, and one balanced approach. Events are scheduled under the rule $\mathrm{CCR}_{[\mathrm{APS}, 0.7]}$. Here, we focus on the generation of event $ e_{38} = (a_2, p_2,+1)$ (proposal event of $p_2$). Table~\ref{tab:example_event_generation_tally} presents the state of the system prior to event $e_{38}$.

\begin{table}[!ht]
    \centering
    \footnotesize
    \caption{Tally matrix for Example~\ref{example:llm_schedule} after the initial $37$ events ($\mathrm{tally}(E[37])$). Under the rule $(\mathrm{APS}, 0.7)$, paragraph $p_1$ is currently included in the document (as $\mathrm{APS}(p_1, E[37]) = \nicefrac{12}{17} > 0.7$).}
    \setlength{\tabcolsep}{3pt} % reduce column padding
    \begin{tabular}{c@{\hspace{1mm}}p{4.6cm}@{\hspace{1mm}}c@{\hspace{1mm}}c}
    \toprule
    \multicolumn{1}{c}{\textbf{Paragraph}} & \multicolumn{1}{c}{\textbf{Text}} & \multicolumn{1}{c}{\textbf{+ Votes}} & \multicolumn{1}{c}{\textbf{- Votes}} \\
    \midrule
    \multirow{2}{*}{$p_1$} & Implement community gardens in urban areas to promote local food production and biodiversity. & \multirow{2}{*}{$12$} & \multirow{2}{*}{$5$} \\
    \bottomrule
    \end{tabular}
    \label{tab:example_event_generation_tally}
\end{table}

Next, we show the prompts that led to the generation of event $e_{38}$. Below is the full system prompt issued to the selected agent $a_2$:

\begin{tcolorbox}[title=System Prompt Example, width=\linewidth]
\small
\raggedright
You are Agent \textcolor{cyan}{\textbf{2}}, a participant in a collaborative constitution writing system focused on \textcolor{cyan}{\textbf{Climate change policy}}. \\[2pt]
You engage dynamically with an evolving draft document, consisting of community proposals. The document is a list of action proposals that your community is considering adopting as policy. \\[2pt]
\textbf{YOUR PROFILE}: \textcolor{cyan}{\textbf{You are a 30–34 years old female with ISCED 2011, level 3 – upper secondary education}}. \\[2pt]
\textbf{YOUR POSITION ON \textcolor{cyan}{\textbf{Climate change policy}}}: \textcolor{cyan}{\textbf{0.05 (active resistance})}. \\[2pt]
Accordingly, you have the following orientation: \textcolor{cyan}{\textbf{complete opposition to in-favor action}}. \\[2pt]
As a participant, you can decide to vote on existing proposals (upvote, downvote, abstain) or suggest a new proposal. \\
Act consistently with your profile. Think step-by-step; align your reasoning explicitly with your stated orientation.
\end{tcolorbox}

The corresponding decision prompt excerpt is presented below.

\begin{tcolorbox}[breakable, title=Decision Prompt - Placeholders and Response, width=\linewidth]
\small
\raggedright
\textbf{SYSTEM STATE:} \textcolor{cyan}{\textbf{\{'paragraph\_id': 1, 'text': 'Implement community gardens in urban areas to promote local food production and biodiversity.', 'votes\_plus': 12, 'votes\_minus': 5, 'own\_vote': '-1', 'In document': 'yes'\}}}. \\[6pt]
\textbf{TASK:} Based on the system state and your profile, you must decide your next action through the following reasoning steps. \\
\textcolor{cyan}{\textbf{You can either PROPOSE a new action or VOTE on an existing proposal}}. Be very concise when addressing each step. \\[6pt]
\textbf{Step 4: Action Formulation} \\[2pt]
\textit{If PROPOSE:} Formulate a new proposal that: \\
- Addresses a sub-topic important to your background \\
- Is practical, realistic, diverse (max 20 words) \\
- Aligns with these examples of your position: \\
\textcolor{cyan}{\textbf{1. [Health] Consider both traditional health metrics and climate-related health impacts in program planning. (Reasoning: Balances conventional and climate-related health considerations equally).}} \\
\textcolor{cyan}{\textbf{2. [Building] Maintain current industry-standard energy codes for all new municipal buildings. (Reasoning: Meets minimum requirements without exceeding them).}} \\
\textcolor{cyan}{\textbf{3. [Land Use] Consider both compact and traditional development patterns equally in zoning decisions. (Reasoning: Balances different development approaches without preference).}} \\[4pt]
\textit{If VOTE:} Choose a paragraph and indicate clearly if you UPVOTE (agree), DOWNVOTE (disagree), or ABSTAIN (change previous vote to undecided). \\[6pt]
\textbf{Response:} \\[4pt]
\textbf{DECISION:} \textcolor{cyan}{\textbf{PROPOSE}} \\
\textbf{PARAGRAPH ID:} \textcolor{cyan}{\textbf{2}} \\
\textbf{ACTION DETAILS:} \textcolor{cyan}{\textbf{Maintain current agricultural practices without promoting new sustainability initiatives}} \\
\textbf{VOTE:} \textcolor{cyan}{\textbf{UPVOTE}} \\
\textbf{REASONING:} \textcolor{cyan}{\textbf{This aligns with my position against new environmental actions and supports traditional practices}.}
\end{tcolorbox}

%%%%%%%%%%%%%%%%%%%%%%%%%%%%%%%%%%%%%%%%%%%%%%%%%%%%%%%%%%%%%%%%%%%%%%%%

%
\section{Additional Results}\label{appendix:additional_results}
\subsection{Solution Size Analysis}\label{appendix:solution_size_analysis}
Table~\ref{tab:solution_size_summary} summarizes the solution sizes obtained in different agent models and rule families at event 300. Notably, the uniform semantic distribution in Euclidean space consistently resulted in much larger solution sizes, particularly under the Static and Smooth conditions. The Smooth condition also exhibited greater variability in solution size, indicating a higher sensitivity to agent models compared to the Static and Harsh conditions.

\begin{table}[th]
    \centering
    \caption{Average solution size $\pm$ std.\ deviation for each agent model and rule family (54 rules, event 300).}
    \label{tab:solution_size_summary}
    \begin{tabular}{l|ccc}
    \toprule
    \multirow{2}{*}{\textbf{Agent Model}} & \multicolumn{3}{c}{\textbf{Rule Family}} \\
     & Static  & Harsh  & Smooth  \\
    \midrule
    Normal       & $5.33 \pm 0.34$ & $3.78 \pm 1.79$ & $1.35 \pm 0.97$ \\
    Two-Peaks    & $53.70 \pm 20.38$ & $4.09 \pm 4.00$ & $13.81 \pm 10.05$ \\
    Unstructured & $54.96 \pm 27.74$ & $5.87 \pm 7.99$ & $25.36 \pm 27.57$ \\
    Uniform      & $\boldsymbol{93.40 \pm 45.42}$ & $\boldsymbol{6.46 \pm 8.73}$ & $\boldsymbol{32.84 \pm 34.70}$ \\
    \bottomrule
    \end{tabular}
\end{table}

\subsection{Extended Pareto Analysis}\label{appendix:extended_pareto_analysis}

\begin{figure}[bht]
    \centering
    \includegraphics[width=0.45\textwidth, height=0.45\textheight, keepaspectratio]  {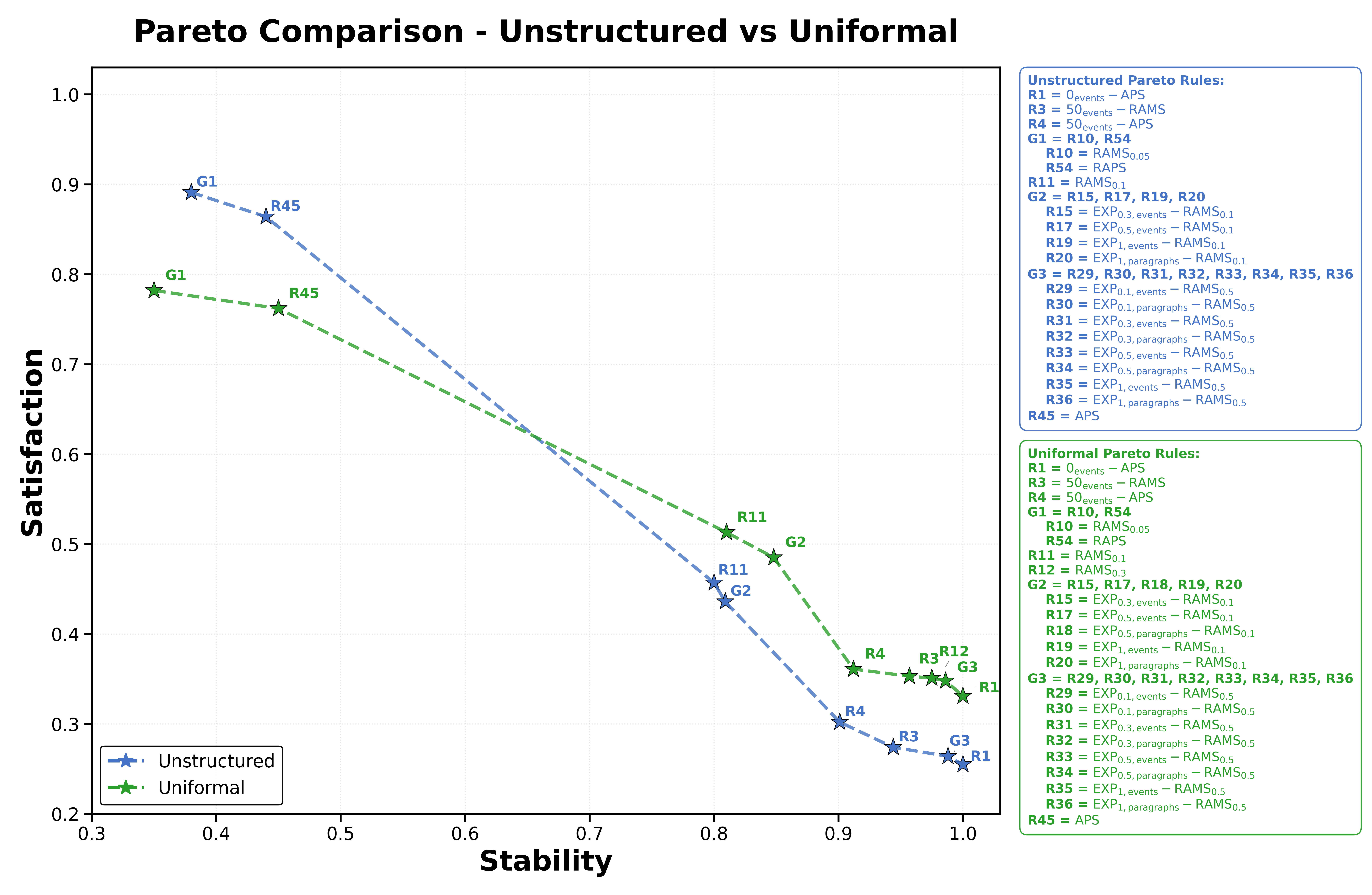}
    \caption{Comparison of the satisfaction-stability Pareto trade-offs achieved under unstructured vs. uniform agent populations (instances with 20 agents and 300 events). As expected, the unstructured (random) population approximates a uniform sentiment distribution, resulting in a Pareto-optimal rule set very similar to the uniform population.}
    \label{fig:unstructured_uniformal_comparasion}
\nicespace
\end{figure}

\end{document}

%%%%%%%%%%%%%%%%%%%%%%%%%%%%%%%%%%%%%%%%%%%%%%%%%%%%%%%%%%%%%%%%%%%%%%